\tikzset{>=stealth'} 
\newtheorem{theorem}{{\bf Theorem}}
\newtheorem{definition}[theorem]{{\bf Definition}}
\newtheorem{lemma}[theorem]{{\bf Lemma}}
\newtheorem{remark}[theorem]{{\bf Remark}}
\newcommand{\N}{\mathbb{N}}
\newcommand{\Z}{\mathbb{Z}}
\newcommand{\x}{\times}
\newcommand{\R}{Spoiler}
\newcommand{\V}{Duplicator}
\newcommand{\lts}{LTS}
\newcommand{\Act}{A}
\newcommand{\step}[1]{\Step{#1}{}{}}
\newcommand{\wstep}[1]{\Wstep{#1}{}{}}
\newcommand{\Wstep}[3]{\ensuremath{\,{\stackrel{#1}{\Longrightarrow}}\!{}^{\scriptstyle{#2}}_{\scriptstyle{#3}}}\,}
\newcommand{\Step}[3]{\ensuremath{\,{\stackrel{#1}{\longrightarrow}}\!{}^{\scriptstyle{#2}}_{\scriptstyle{#3}}}\,}
\newcommand{\SIM}[2]{\ensuremath{\,\simul^{#1}_{#2}}\,}
\newcommand{\notSIM}[2]{\ensuremath{\,\not\simul^{#1}_{#2}}\,}
\newcommand{\WSIM}{\curlyeqprec}
\newcommand{\PSPACE}{PSPACE}
\newcommand{\pspace}{\PSPACE}
\newcommand{\EXPSPACE}{EXPSPACE}
\newcommand{\OCN}{OCN}
\newcommand{\OCA}{OCA}
\newcommand{\net}{{\cal N}}
\newcommand{\snet}{{\cal S}}
\newcommand{\simul}{\preccurlyeq}
\newcommand{\card}[1]{|#1|}
\newtheorem{claim}{Claim}
\newcommand{\qq}{\text {\sc K}}
\newcommand{\dmax}{d_\text{max}}
\newcommand{\lesssteep}{\prec}
\newcommand{\ignore}[1]{}
\newcommand{\scc}{\it scc}
\newcommand{\acyc}{\it acyc}
\newcommand{\poly}{\it poly}
\newcommand{\suff}[1]{{\it suf}({#1})}
\newcommand{\nic}[1]{}
\newcommand{\cycl}{\text{\sc cycle}}
\newcommand{\pref}{\text{\sc prefix}}
\title{\bf Simulation Over One-counter Nets is PSPACE-Complete
    \thanks{
Technical Report EDI-INF-RR-1418 of the School of Informatics at the University
of Edinburgh, UK. (http://www.inf.ed.ac.uk/publications/report/).
Extended version of material presented at FST\&TCS 2013.
Made available at arXiv.org - Creative Commons License CC-BY.
This work was partially supported by Polish NCN grant 2012/05/NST6/03226 and Polish MNiSW grant N N206 567840.
}}
\author[1]{Piotr Hofman}
\author[1]{S{\l}awomir Lasota}
\author[2]{Richard Mayr}
\author[2]{Patrick Totzke}
\affil[1]{University of Warsaw, Poland}
\affil[2]{University of Edinburgh, UK}
\begin{document}

\maketitle

\begin{abstract}
    One-counter nets (\OCN) are Petri nets with exactly one unbounded place.
They are equivalent to a subclass of one-counter automata with just a weak test for zero.
Unlike many other semantic equivalences, strong and weak simulation preorder
are decidable for \OCN, but the computational complexity was an open problem.
We show that both strong and weak simulation preorder on \OCN\ are \pspace-complete.

\end{abstract}

\section{Introduction}\label{sec:introduction}
{\bf\noindent The model.}
One-counter automata (\OCA) are Minsky counter automata with only one counter,
and they can also be seen as a subclass of pushdown automata with just one
stack symbol (plus a bottom symbol).
One-counter nets (\OCN) are Petri nets with exactly one unbounded place,
and they correspond to a subclass of \OCA\ where the counter
cannot be fully tested for zero, because transitions enabled at counter value
zero are also enabled at nonzero values.
\OCN\ are arguably the simplest model of discrete infinite-state systems,
except for those that do not have 
a global finite control.

\subparagraph*{\bf\noindent Previous results on semantic equivalence checking.}
Notions of behavioral semantic equivalences have been classified in 
Van Glabbeek's linear time - branching time spectrum \cite{Gla2001}.
The most common ones are, in order from finer to coarser, 
bisimulation, simulation and trace equivalence.
Each of these have their standard (called strong) variant, and a weak variant
that abstracts from arbitrarily long sequences of internal actions.

For \OCA/\OCN, strong bisimulation 
is \pspace-complete \cite{BGJ2010}, 
while weak bisimulation is undecidable \cite{May2003}.
Strong trace inclusion is undecidable for \OCA\ \cite{Valiant1973},
and even for \OCN\ \cite{HMT:LICS2013}, and this trivially carries over to weak trace
inclusion.

The picture is more complicated for simulation preorders.
While strong and weak simulation are undecidable for \OCA\ \cite{JMS1999},
they are decidable for \OCN.
Decidability of strong simulation on \OCN\ was first proven in \cite{AC1998},
by establishing that the simulation relation follows a certain regular pattern.
This idea was made more graphically explicit in later proofs \cite{JM1999,JKM2000},
which established the so-called {\em Belt Theorem}, that states that the simulation
preorder relation on \OCN\ can be described by finitely many partitionings of the
grid $\N\x\N$, each induced by two parallel lines.
In particular, this implies that the simulation relation is semilinear.
However, the proofs in \cite{AC1998,JM1999,JKM2000} did not yield any upper
complexity bounds, since the first was based on two semi-decision procedures
and the later proof of the Belt Theorem was non-constructive. 
A \pspace\ lower bound for strong simulation on \OCN\ follows from \cite{Srb2009}.

Decidability of weak simulation on \OCN\ was shown in \cite{HMT:LICS2013}, using 
a converging series of semilinear approximants. This proof used the 
decidability of strong simulation on \OCN\ as an oracle, and thus did not immediately yield any 
upper complexity bound.

\subparagraph*{Our contribution.}
We provide a new constructive proof of the Belt Theorem and derive a \pspace\ algorithm
for checking strong simulation preorder on \OCN.
Together with the lower bound from \cite{Srb2009}, this shows \pspace-completeness of the problem.

Via a technical adaption of the algorithm for weak simulation in \cite{HMT:LICS2013},
and the new \pspace\ algorithm for strong simulation, we also obtain a \pspace\ algorithm
for weak simulation preorder on \OCN. Thus even weak simulation preorder on \OCN\ is 
\pspace-complete.

\newcommand{\undec}{undecidable}
\begin{center}
  \begin{tabular}{ | l | c | c| c | c  | c | }
    \hline
    	 	& simulation 		& bisimulation 			& weak sim.	& weak bis.		 & trace inclusion \\ \hline
    OCN 	& \textbf{\pspace}	& \pspace\ \cite{BGJ2010} 	&\textbf{ \pspace}	& \undec\ \cite{May2003}	 &\undec\ \cite{Valiant1973}\\ \hline
    OCA 	& \undec\ \cite{JMS1999}	& \pspace\ \cite{BGJ2010}	 & \undec\ \cite{JMS1999}	 & \undec\ \cite{May2003} 	&\undec\ \cite{HMT:LICS2013}\\
    \hline
  \end{tabular}
\end{center}

\section{Problem Statement}\label{sec:problem}

A labelled transition system (\lts) over a finite alphabet $\Act$ of actions
consists of a set of configurations and, for every action $a \in \Act$, a binary relation
$\step{a}$ between configurations.

Given two \lts\ $S$ and $S'$, a relation $R$ between the configurations
of $S$ and $S'$ is a \emph{simulation} if for every 
pair of configurations $(c, c') \in R$ and every step $c \step{a} d$
there exists a step $c' \step{a} d'$ such that $(d,d') \in R$.
Simulations are closed under union, so there exists a unique maximal simulation.
If $S=S'$ then this maximal simulation is a preorder, called {\em simulation preorder},
and denoted by $\simul$. If $c \simul c'$ then one says that $c'$ \emph{simulates} $c$.

Simulation preorder can also be characterized by a \emph{Simulation Game} as follows.
The \emph{positions} are
all pairs $(c, c')$ of configurations of $S$ and $S'$ respectively.
The game is played by two players called \emph{\R} and \emph{\V} and proceeds in rounds.
In every round, starting in a position $(c,c')$,
\R\ chooses some $a \in \Act$ and some configuration $d$ with $c \step{a} d$.
Then \V\ responds by choosing a configuration $d'$ with $c' \step{a} d'$,
and the next round continues from position $(d, d')$.
If one of the players cannot move then the other player wins, and \V\ wins
every infinite play.
It is well known that the Simulation Game is determined: 
for every initial position $(c, c')$, exactly one of players has a winning strategy.
Configuration $c'$ simulates $c$ iff \V\ has a strategy to win the Simulation Game from position $(c, c')$.

\begin{definition}[One-Counter Nets]
    A \emph{one-counter net} (\OCN) is a triple $\net=(Q,\Act,\delta)$
    given by finite sets of control-states $Q$, action labels $\Act$
    and transitions $\delta\subseteq Q\x \Act\x\{-1,0,1\}\x Q$.
    It induces an infinite-state labelled transition system over the
    state set $Q\x\N$, whose elements will be written as $pm$, where
    $pm\step{a}qn$ iff $(p,a,d,q)\in\delta \text{ and }n=m+d\ge0$.
\end{definition}
We study the computational complexity of the following decision problem.

\vspace{0.3cm}
\begin{tabular}{ll}
  \multicolumn{2}{l}{\bf Simulation Checking for \OCN}\\
  \hline
  \sc Input:  & Two \OCN\ $\net$ and $\net'$ together with configurations $qn$ and $q'n'$\\
              & of $\net$ and $\net'$ respectively, where $n$ and $n'$ are given in binary.\\
  \sc Question: & $qn \simul q'n'$ ?
\end{tabular}
\vspace{0.3cm}

\begin{theorem}\label{thm:strongsim-pspace}
    The Simulation Checking Problem for \OCN\ is in \pspace.
\end{theorem}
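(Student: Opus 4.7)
The plan is to give a constructive version of the Belt Theorem that bounds all the geometric parameters describing the simulation relation by quantities of polynomial bitsize, and then to design a \pspace\ algorithm that uses these bounds directly, rather than brute-forcing the simulation game on the exponentially-sized arena induced by binary counter encodings. Recall that the Belt Theorem asserts that for each pair $(p,q)$ of control states of the two nets, the set $\{(m,n) : pm \simul qn\}$ coincides, outside a bounded ``initialization'' region near the origin, with a belt of some rational slope $\rho_{p,q}$, that is, a strip between two parallel rays. The algorithm will manipulate these belt descriptions instead of evaluating the simulation game at every position.

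The first step is to bound the belt parameters. Each candidate slope arises as the ratio of counter-effects along cycles in a suitable product of $\net$ and $\net'$, so all slopes come from a fixed finite set of rationals of polynomial bitsize. For the widths and offsets I would use a pumping argument on the simulation game: if \V\ wins from some position $(pm, qn)$ with large counter values, then a Dickson-style shortening of her strategy should show that the relevant counter excursions can be bounded by an exponential in the size of the input. This yields an exponentially large, but polynomially describable, initialization region, beyond which the belt behaviour is periodic in a controlled manner.

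Given these quantitative bounds, the decision procedure on inputs $qn$, $q'n'$ classifies the point $(n,n')$ relative to the belt for $(q,q')$. If $(n,n')$ lies outside the belt, simulation is determined by a simple reachability predicate computable in \pspace. If it lies inside the initialization region, the simulation game is evaluated directly in alternating polynomial time with counters stored in binary, which is in \pspace\ by Chandra--Kozen--Stockmeyer. If it lies in the periodic part of the belt, periodicity is used to shift $(n,n')$ into the initialization region without changing the outcome. The belt parameters needed at each step are guessed and verified, relying recursively on the same algorithm for simpler subgames; since only polynomial-bitsize data are passed around, polynomial space suffices throughout.

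The main obstacle is obtaining the quantitative Belt Theorem itself: the previous proofs \cite{JM1999,JKM2000} were non-constructive and yielded no effective bounds on belt widths and offsets. Bounding the slopes is comparatively straightforward, but bounding the width and the size of the initialization region requires a careful combinatorial analysis of \V's optimal strategies, and in particular the extraction of polynomially checkable certificates for membership in a belt, so that \pspace\ really suffices rather than merely a space bound exponential in the belt parameters.
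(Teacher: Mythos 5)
Your high-level plan coincides with the paper's: obtain a constructive, quantitative version of the Belt Theorem and then build a \pspace\ procedure around it. The bound on the slopes via cycle effects also matches. But there are two genuine gaps.

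First, the central technical difficulty — bounding the belt width $c$ — is exactly where your sketch is weakest. A ``Dickson-style shortening'' is essentially how the earlier non-constructive proofs went, and making it effective is nontrivial; you do not say how a well-quasi-order argument yields a concrete bound, nor why the bound would be small. The paper's solution is a new object, the finitary \emph{Slope Game}, played symbolically on the product control graph: it detects lassos, compares the effect of the lasso's cycle to a running ``slope'' vector, and terminates within $(K+1)^2$ phases because slopes strictly flatten. The two key lemmas (Lemmas~\ref{lem:key-lemma-R} and \ref{lem:key-lemma-V}) transfer Slope-Game winning strategies to Simulation-Game winning strategies provided the counter pair is $\qq\cdot d$-below/above the slope, and this yields \emph{polynomial} bounds on $c$ and the $\rho,\rho'$. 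Your version aims only for exponential bounds, which changes the rest of the argument.

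Second, your decision procedure does not go through as stated. Inside an ``exponentially large but polynomially describable initialization region,'' you propose to evaluate the simulation game ``directly in alternating polynomial time with counters stored in binary.'' But the simulation game on such a region can take exponentially many rounds (its implicit game graph has exponentially many positions, and \V\ wins all infinite plays), so this is alternating \emph{polynomial space}, i.e.\ \EXPSPACE\ without further ideas, not APTIME. The paper avoids this entirely: the initial rectangle $\is$ is \emph{polynomially} bounded (because all belts have polynomial slope and width, so non-parallel belts separate quickly), and inside a belt the simulation colouring is ultimately periodic with exponentially bounded period parameters (Lemma~\ref{lem:simul-periodic}); the \pspace\ algorithm then guesses and locally verifies the semilinear description using a shifting polynomial-size window along each belt, counting shifts in binary to enforce termination. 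You would need some analogue of the periodicity lemma and of the shifting-window verification; ``guess and verify recursively on simpler subgames'' does not fill that hole, since it is not clear what makes a subgame simpler or how the recursion terminates within polynomial space.
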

Combined with the \pspace-hardness result of~\cite{Srb2009}, this yields \pspace-completeness of the problem.

\begin{remark}\label{rem:semilinearsize}
Our construction can also be used to compute the simulation relation as a semilinear
set, but its description requires exponential space. 
However, checking a point instance $qn \simul q'n'$ of the simulation problem
can be done in 
polynomial space by stepwise guessing and verifying only a polynomialy bounded part of the
relation; cf.~Section~\ref{sec:pspace}.
\end{remark}

Without restriction (see \cite{AC1998} for a justification) we assume that both
\OCN\ are \emph{normalised}:
\begin{enumerate}
    \item In \R's net $\net$, every control-state has some outgoing transition with a
        non-negative change of counter value.
    \item \V's net $\net'$ is \emph{complete}, i.e., every control-state has an outgoing
        transition for every action (though the change in counter value may be negative).
\end{enumerate}
Thus \R\ cannot get stuck and only loses the game if it is infinite.
Moreover, \V\ can only be stuck (and lose the game) when his counter equals zero.

\subparagraph*{Outline of the proof.}
One easily observes that the Simulation Game is monotone for both players.
If \V\ wins the Simulation Game from a position $(qn, q'n')$ then he also wins from
$(qn, q'm)$ for $m > n'$. Similarly, if \R\ wins from $(qn, q'n')$ then she also wins
from $(qm, q'n')$ for $m > n$.
For a fixed pair $(q,q')$ of control-states, both players winning regions
therefore split the grid $\N\x\N$ into two connected subsets.
It is known \cite{JM1999,JKM2000} that the \emph{frontier} between these subsets
is contained in a \emph{belt}, i.e., it lays between two parallel lines with rational
slope.

For the proof of our main result we analyse a symbolic \emph{Slope Game}. This new
game is similar to the Simulation Game but necessarily ends after a small number of
rounds. We show that given sufficiently high excess of counter-values, both
players can re-use winning strategies for the Slope Game also in the Simulation Game.
As a by-product of this characterization, we obtain polynomial bounds on widths and
slopes of the belts. Once the belt-coefficients are known, one can compute the frontiers
exactly because every frontier necessarily adheres to a regular pattern.

\section{Polynomially Bounded Belts}\label{sec:belt}
Let us fix two \OCN\ $\net$ and $\net'$, with sets of control-states $Q$ and
$Q'$, respectively. Following \cite{JKM2000},
we interpret $\SIM{}{}$ as 2-colouring of $K=|Q\x Q'|$ Euclidean planes, one for each
pair of control-states $(q,q')\in Q\x Q'$.

The main combinatorial insight of \cite{JKM2000}
(this was also present in \cite{AC1998}, albeit less explicitly)
is the so-called \emph{Belt Theorem}, that states that
each such plane can be cut into segments by two parallel lines such that
the colouring of $\SIM{}{}$ in the outer two segments is constant; see Figure~\ref{fig:belt}.
We provide a new constructive proof of this theorem, stated as
Theorem~\ref{thm:belt-theorem} below, that allows us to derive polynomial bounds on
the coefficients of all belts.

\begin{definition}[Positive vectors, direction, c-above, c-below]
    A vector $(\rho,\rho')\in \Z\x\Z$ of integers
    is called \emph{positive} if $(\rho,\rho') \in \N\x\N$ and $(\rho,\rho')\neq (0,0)$.
    Its \emph{direction} is the half-line
    $\mathbb{R}^+\cdot(\rho,\rho')$. 
    For a positive vector $(\rho,\rho')$ 
    and a number $c\in\N$ we say that the point $(n,n')\in\Z\x\Z$ is
    \emph{$c$-above} $(\rho,\rho')$ iff
    there exists some point $(r,r') \in \mathbb{R}^+\cdot(\rho,\rho')$ in the
    direction of $(\rho,\rho')$
    such that
    \begin{equation}
      n < r - c \qquad \text{and} \qquad n' > r' + c.
    \end{equation}
    Symmetrically, $(n,n')$ is \emph{$c$-below} $(\rho,\rho')$ if is a
    point $(r,r')\in \mathbb{R}^+\cdot(\rho,\rho')$ with
    \begin{equation}
      n > r + c \qquad \text{and} \qquad n' < r' - c.
    \end{equation}
\end{definition}

\begin{theorem}[Belt Theorem]\label{thm:belt-theorem}
    For every two one-counter nets $\net$ and $\net'$ with sets of control-states $Q$
    and $Q'$ respectively, there is a bound $c \in \N$ such that 
    for every pair $(q,q')\in Q\x Q'$ of control-states
    there is a positive vector $(\rho, \rho')$ such that
    \begin{enumerate}
        \item if $(n,n')$ is $c$-above $(\rho,\rho')$ then $qn\SIM{}{}q'n'$, and
        \item if $(n,n')$ is $c$-below $(\rho,\rho')$ then $qn\notSIM{}{}q'n'$.
    \end{enumerate}
    Moreover, $c$ and all $\rho,\rho'$ are bounded polynomially w.r.t.~the sizes of $\net$ and $\net'$.
\end{theorem}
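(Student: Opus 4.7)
The plan is to introduce a symbolic \emph{Slope Game} whose positions pair a control-state pair $(q,q')$ with a rational \emph{slope} representing a candidate belt direction, and whose plays last only polynomially many rounds. A move in the Slope Game mimics a move in the Simulation Game but, rather than tracking concrete counters, it records how the counter-change pair updates the current slope. Termination in polynomially many rounds is enforced by restricting slopes to a polynomial-size set (e.g.\ ratios of integers bounded by the largest simple-cycle effects in $\net,\net'$), so that a pigeonhole / well-quasi-ordering argument closes every play after polynomially many rounds. Using determinacy, for each position $(q,q')$ the Slope Game produces a canonical winning slope $(\rho,\rho')$, which will be the belt direction we assign to the pair.

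The heart of the proof consists of two \emph{strategy-transfer lemmas}. (i) If Duplicator has a winning Slope-Game strategy from $(q,q',\rho,\rho')$, then she wins the Simulation Game from every configuration $(qn,q'n')$ that is $c$-above $(\rho,\rho')$, for one polynomial $c$. Duplicator's lifted strategy interleaves her symbolic moves with \emph{pumping phases} in which she traverses positive-change cycles of her net (available by normalisation of $\net'$) to replenish her counter whenever it threatens to hit zero; the margin $c$ from the line absorbs the bounded deviation incurred per round. (ii) Symmetrically, if Spoiler wins the Slope Game, she wins the Simulation Game from every position $c$-below $(\rho,\rho')$; here the normalisation assumption on $\net$ (every state has an outgoing transition with non-negative effect, so Spoiler never gets stuck) is used dually.

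The main obstacle is the bookkeeping in these transfer lemmas: each round of the abstract game can consume a bounded amount of the horizontal/vertical margin to $(\rho,\rho')$, and the effect of slope changes compounds over time, so one must simultaneously bound the number of rounds (via the polynomial Slope-Game length) and the margin cost per round. Once that is in place, Theorem~\ref{thm:belt-theorem} follows: assign to each $(q,q')$ the winning slope $(\rho,\rho')$ produced by the Slope Game, and take $c$ to be the maximum buffer required across the two transfer lemmas over all $K=|Q\times Q'|$ pairs. Both quantities are polynomial in $|\net|+|\net'|$ by the size bound on the slope set and the polynomial round-count, which gives the desired polynomial bound asserted in the theorem.
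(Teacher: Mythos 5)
Your high-level plan matches the paper's: introduce a finitary Slope Game whose positions carry a slope vector, bound the number of rounds by restricting slopes to (integer-coordinate) simple-cycle effects, prove two strategy-transfer lemmas showing that a Slope-Game win gives a Simulation-Game win from all positions sufficiently far above/below the slope, and set $c$ to absorb the per-round drift times the round count. This is indeed the paper's route, and the termination bound and the polynomial bound on $(\rho,\rho')$ are obtained the same way.

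However, the mechanism you describe for the Duplicator-side transfer lemma is incorrect and would not close the gap. You write that Duplicator ``interleaves her symbolic moves with pumping phases in which she traverses positive-change cycles of her net (available by normalisation of $\net'$) to replenish her counter whenever it threatens to hit zero.'' This cannot work for two reasons. First, in a simulation game Duplicator has no freedom to insert extra moves: she must respond to the action Spoiler just played, from her current state, one step per round, so there is no opportunity to ``interleave pumping phases.'' Second, you misattribute the normalisation hypothesis: $\net'$ is only assumed \emph{complete} (an $a$-labelled transition exists from every state, possibly decreasing the counter), not that positive-effect cycles are available on demand. The non-negative-transition assumption belongs to Spoiler's net $\net$.

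What actually keeps Duplicator's counter positive in the paper is entirely different and has to be proved: whenever a lasso is formed in the tracked Slope-Game play, Duplicator \emph{rolls back} the cycle and continues playing the same segment strategy; her Slope-Game strategy guarantees the cycle effect is not behind the current slope, and Lemma~\ref{lem:preserve-above} then shows that adding such a vector preserves the $c$-above property, hence keeps $n'$ large. The buffer $c=\qq\cdot d$ is consumed only by acyclic prefixes (at most $\qq$ per phase) and by the segment transitions, so an induction on the segment depth $d$ is needed, together with Lemma~\ref{lem:constant-winner} to pass to the next phase once the actual counter ratio becomes equivalent to the new slope. The symmetric rollback argument is needed for the Spoiler lemma as well. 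Without these ingredients — the rollback bookkeeping, the invariant preservation via Lemma~\ref{lem:preserve-above}, and the per-segment induction — the transfer lemmas, which you rightly call ``the heart of the proof,'' do not go through.
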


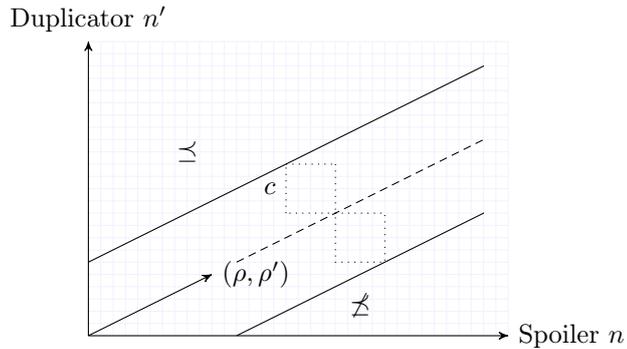
\begin{figure}
  \begin{center}
        \def\c{1}
  \def\RX{5}
  \def\RY{2.5}
  \def\xmax{8}
  \def\ymax{6.5}
  \def\width{8.5}
  \def\height{6}
  \def\xaxispoints{0, 0.5, 1, 1.5, 2, 2.5, 3, 3.5, 4}
  \def\yaxispoints{0, 0.5, 1, 1.5, 2, 2.5, 3, 3.5, 4}
  \def\offset{0}
  \begin{tikzpicture}[scale=0.65]
    \path[use as bounding box](0,0.5) rectangle (8,6);
    \tikzstyle{every node}=[font=\small]
    \coordinate (rhorho') at (\RX,\RY);
    \draw[step=0.25,blue!5!white, very thin] (0,0) grid (\width,\height);
    
    \draw[name path=yaxis, ->] (0,0) -- ($(0,\height)$) node[black,above] {\V\ $n'$};
    \draw[name path=xaxis, ->] (0,0) -- ($(\width,0)$) node[black,right] {\R\ $n$};

    \coordinate (above) at ($(rhorho')+(-\c,\c)$);
    \path[name path=upper] (above) -- +(-\RX, -\RY);
    \path [name intersections={of=upper and yaxis,by=0C}];

    \coordinate (below) at ($(rhorho')+(\c,-\c)$);
    \path[name path=lower] (below) -- +(-\RX, -\RY);
    \path[name intersections={of=lower and xaxis,by=C0}];

    \begin{scope}
      \path[clip, name path=border] (0,0) -- (0,\ymax) -- (\xmax,\ymax) -- (\xmax,0);
      \path[name path=outerborder] (0,\ymax) -- (\xmax,\ymax) -- (\xmax,0);  
      \draw[->] (0,0) -- ($.5*(rhorho')$) node[right]{$(\rho,\rho')$};
      \draw[densely dashed, segment length=20pt] ($.6*(rhorho')$) -- ($10*(rhorho')$);

      \draw[dotted] (rhorho') edge node[below] {} ($(rhorho')+(-\c,0)$);
      \draw[dotted] ($(rhorho')+(-\c,0)$) edge node[left] {$c$} ($(rhorho')+(-\c,\c)$);
      \draw[dotted] ($(rhorho')+(-\c,\c)$) -- ($(rhorho')+(0,\c)$);
      \draw[dotted] ($(rhorho')+(0,\c)$) -- (rhorho');

      \draw[dotted] (rhorho')
                    -- ($(rhorho')+(\c,0)$)
                    -- (below)
                    -- ($(rhorho')+(0,-\c)$)
                    -- cycle;

      \draw[name path=lower] (C0) -- ($(C0)+20*(rhorho')$);  
      \draw[name path=upper] (0C) -- ($(0C)+20*(rhorho')$);  
      
      \path[name intersections={of=upper and outerborder,by=upperEnd}];
      \coordinate (midabove) at ($(upperEnd)!.5!(0,\ymax)$);
      \node at ($(0C)!.5!(midabove)$) {$\preceq$};
      
      \path[name intersections={of=lower and outerborder,by=lowerEnd}];
      \coordinate (midbelow) at ($(lowerEnd)!.5!(\xmax,0)$);
      \node at ($(midbelow)!.5!(C0)$) {$\not\preceq$};
    \end{scope}
  \end{tikzpicture}
  \end{center}
  \caption{A belt with slope $\frac{\rho}{\rho'}$. The dashed half-line is the direction of
      $(\rho,\rho')$.}
  \label{fig:belt}
\end{figure}

%
%

%

\section{Proof of the Belt Theorem}\label{sec:beltproof}
We consider \OCN\ $\net$ and $\net'$ with sets of control-states $Q$ and $Q'$, resp.,
and define the constant $K=|Q\x Q'|$.
Abdulla and Cerans~\cite{AC1998} showed that, above a certain level, the
simulation relation has a regular structure. An important parameter for this 
structure is the {\em ratio} $n/n'$ of the respective counter values $n$ 
in \R's configuration $qn$ of $\net$ and
$n'$ in \V's configuration $q'n'$ of $\net'$.

We further develop this intuition by defining a new finitary game
(called the Slope Game; cf. Section~\ref{subsec:slopegame}) 
that is played directly on the control graphs of the nets,
and in which the objective of the players is to minimize (resp.~maximize)
the ratio of the effects of recently observed minimal cycles.
Then we show how to transform winning strategies in the Slope Game into winning
strategies in the original simulation game.
First we need to define some properties of vectors.

\begin{definition}[Behind, Steeper]
    Let $(\rho,\rho')$ be a positive and $(\alpha,\alpha')\in\Z^2$ an arbitrary vector.
    We place the two on the plane with a common starting point and consider the clockwise oriented angle
    from $(\rho,\rho')$ to $(\alpha,\alpha')$. We say that $(\alpha,\alpha')$ is \emph{behind}
    $(\rho,\rho')$ if the oriented angle is strictly between $0^\circ$ and $180^\circ$. 
    See Figure~\ref{fig:behind} for an illustration.

    Positive vectors may be naturally ordered: We will call $(\rho,\rho')$
    \emph{steeper} than $(\alpha,\alpha')$,
    written $(\alpha,\alpha') \lesssteep (\rho,\rho')$, if $(\alpha,\alpha')$ is behind
    $(\rho,\rho')$.
\end{definition}
Note that the property of one vector being behind another only depends on their directions.
The following simple lemma will be useful in the sequel.
\newpage
\begin{lemma}\label{lem:preserve-above}
Let $(\rho,\rho')$ be a positive vector and $c,n,n'\in \N$.
\begin{enumerate}
    \item If $(n,n')$ is $c$-below $(\rho,\rho')$
        then $(n,n') + (\alpha,\alpha')$ is $c$-below $(\rho,\rho')$
        for any vector $(\alpha,\alpha')$ which is behind $(\rho,\rho')$.
    \item If $(n,n')$ is $c$-above $(\rho,\rho')$
        then $(n,n') + (\alpha,\alpha')$ is $c$-above $(\rho,\rho')$
        for any vector $(\alpha,\alpha')$ which is not behind $(\rho,\rho')$.
\end{enumerate}
\end{lemma}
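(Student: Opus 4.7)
The plan is to reduce both parts of the lemma to a single linearity argument via a scalar functional. For a positive vector $(\rho,\rho')$, define $L\colon\Z^2\to\Z$ by $L(x,y) = \rho y - \rho' x$; then $L$ vanishes on the line $\mathbb{R}\cdot(\rho,\rho')$, is strictly positive on its counter-clockwise side, and strictly negative on its clockwise side.

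The first step is to translate the geometric definitions into inequalities on $L$. Unfolding the definition of $c$-above and eliminating the scalar parameter $t>0$ shows, in the generic case $\rho,\rho' > 0$, that
\[
  (n,n') \text{ is $c$-above } (\rho,\rho') \iff L(n,n') > c(\rho+\rho'),
\]
and symmetrically $(n,n')$ is $c$-below iff $L(n,n') < -c(\rho+\rho')$. The \emph{behind} relation is captured by the same functional: since $L$ flips sign across the line $\mathbb{R}\cdot(\rho,\rho')$, a vector $(\alpha,\alpha')$ is behind $(\rho,\rho')$ iff $L(\alpha,\alpha') < 0$, and not behind iff $L(\alpha,\alpha') \geq 0$.

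The lemma then follows immediately from the linearity $L\bigl((n,n')+(\alpha,\alpha')\bigr) = L(n,n') + L(\alpha,\alpha')$: combining a strict lower bound on $L(n,n')$ with $L(\alpha,\alpha')\geq 0$ preserves the bound, giving Part~2, and symmetrically combining a strict upper bound with $L(\alpha,\alpha') < 0$ gives Part~1.

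The main care required is in the boundary situations: the aligned and exactly opposite directions $(\alpha,\alpha') = k(\rho,\rho')$ where $L$ vanishes; the behaviour of the $c$-above region near the ``corner'' endpoint $(-c,c)$ of its bounding half-line, where the region is not a strict half-plane in $\Z^2$; and the degenerate axis-aligned slopes $\rho = 0$ or $\rho' = 0$, where the characterization via $L$ alone is incomplete. I expect these to be the only delicate points of the argument, each dispatched by a short direct verification rather than by any further algebraic machinery.
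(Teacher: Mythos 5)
The paper itself gives no proof of this lemma, stating it only as a ``simple lemma''. Your reduction to the linear functional $L(x,y) = \rho y - \rho' x$ is the natural argument and almost certainly the one the authors have in mind: on $\N\times\N$, with $(\rho,\rho')$ a positive vector, one checks by eliminating the parameter $t>0$ that $(n,n')$ is $c$-above iff $L(n,n') > c(\rho+\rho')$, that $(n,n')$ is $c$-below iff $L(n,n') < -c(\rho+\rho')$, and that $(\alpha,\alpha')$ is behind iff $L(\alpha,\alpha')<0$; and these equivalences hold even in the degenerate cases $\rho=0$ or $\rho'=0$ as long as the point lies in $\N\times\N$. Given them, the lemma is immediate from linearity of $L$.

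The genuine gap in your write-up is that the ``corner'' case you defer to a ``short direct verification'' is not a corner case that can be verified --- it is a case where the statement, read literally, is false, and something more than verification is required. The lemma does not require $(n,n')+(\alpha,\alpha')$ to lie in $\N\times\N$, and off $\N\times\N$ the $c$-above set is \emph{not} the half-plane $\{L > c(\rho+\rho')\}$: because the direction is a half-line, the $c$-above region is that half-plane with the quadrant $\{(m,m') : m < -c,\ m'\le c\}$ removed. Concretely, take $(\rho,\rho')=(1,1)$, $c=1$, $(n,n')=(0,3)$, $(\alpha,\alpha')=(-4,-3)$. Then $(0,3)$ is $1$-above (take $t=3/2$), and $(-4,-3)$ is not behind $(1,1)$ since $L(-4,-3)=1>0$ (it lies counter-clockwise of the direction), yet $(0,3)+(-4,-3)=(-4,0)$ is \emph{not} $1$-above: that would need some $t>0$ with $0 > t+1$. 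So a direct check at the boundary would uncover a counterexample rather than dispatch the case. What is needed instead is to make explicit the restriction under which the claim is actually used and under which your $L$-argument is exact: every application in the paper adds a vector between two game positions, so both $(n,n')$ and $(n,n')+(\alpha,\alpha')$ lie in $\N\times\N$, and there your linearity argument is airtight. (Equivalently, the lemma becomes correct as literally stated if ``direction'' is read as the whole line $\mathbb{R}\cdot(\rho,\rho')$ instead of the half-line $\mathbb{R}^+\cdot(\rho,\rho')$, since then $c$-above is exactly the half-plane $\{L > c(\rho+\rho')\}$.) You should replace the hand-wave with one of these two precise fixes.
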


\begin{figure}[h]
  \begin{minipage}[c]{.475\textwidth}
      \centering
      \includegraphics{./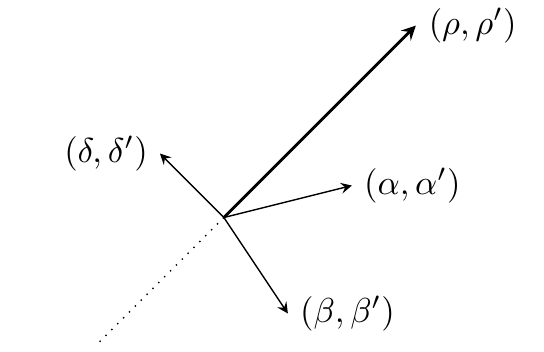}
      \caption{Vectors $(\alpha,\alpha')$ and $(\beta,\beta')$ are behind $(\rho,\rho')$, but
          $(\delta,\delta')$ is not. Also, $(\alpha,\alpha')\lesssteep(\rho,\rho')$.}
  \label{fig:behind}
    \end{minipage}
    \qquad
  \begin{minipage}[c]{.45\textwidth}
  \centering
    \includegraphics{./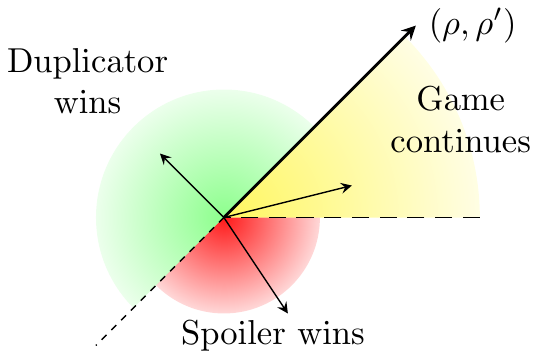}
  \caption{Evaluating the winning condition in position $(\pi,(\rho,\rho'))$
      after a phase of the Slope Game.}
  \label{fig:slope-game}
  \end{minipage}
\end{figure}

\subsection{Slope Game}\label{subsec:slopegame}

\begin{definition}[Product Control Graph, Lasso, Effect of a path]
    Given two \OCN\ $\net=(Q,\Act,\delta)$ and
    $\net'=(Q',\Act,\delta')$,
    their \emph{product control graph} is the finite,
    edge-labelled graph with nodes 
    $Q\x Q'$ and $(\Act\x\N\x\N)$-labelled edges
    $E$ given by
    \begin{equation}
      (p,p')\step{a,d,d'}(q,q')\in E
      \text{ iff } p\step{a,d}q \in\delta
      \text{ and } p'\step{a,d'}q'\in \delta'.
    \end{equation}
    A \emph{path}
    \begin{equation}
        \pi = (q_0,q'_0)\step{a_0,d_0,d_0'}(q_1,q'_1)\step{a_1,d_1,d'_1}\dots\step{a_{k-1},d_{k-1},d'_{k-1}}(q_k,q'_k)
    \end{equation}
    from $(q_0,q'_0)$ to $(q_k,q'_k)$ in this graph
    is called \emph{lasso} if it contains a cycle while none of its strict prefixes does.
    That is, if there exist $i<k$ such that $(q_k,q'_k)=(q_i,q'_i)$ and for all $0\le i<j< k$,
    $(q_i,q'_i)\neq (q_j,q'_j)$. 
    The lasso $\pi$ splits into
    $\pref(\pi)=(q_0,q'_0)\step{a_0,d_0,d_0'}\dots\step{a_{i-1},d_{i-1},d'_{i-1}}(q_i,q'_i)$
    and $\cycl(\pi) = (q_i,q'_i)\step{a_i,d_i,d_i'} \dots\step{a_{k-1},d_{k-1},d'_{k-1}}(q_k,q'_k)$.
    The \emph{effect} of a path is the cumulative sum of the effects of its transitions:
    \begin{equation}
        \Delta(\pi) = \sum_{i=0}^{k-1} (d_i,d_i') \in \Z\x\Z.
    \end{equation}
\end{definition}
The effects of cycles will play a central role in our further construction.
The intuition is that if a play of a Simulation Game describes a lasso then the
players ``agree'' on the chosen cycle. Repeating this cycle will change the ratio of the
counter values towards its effect.

To formalize this intuition, we define a finitary Slope Game which
proceeds in phases.
In each phase, the players alternatingly move on the control graphs of their original
nets, ignoring the counter, and thereby determine the next lasso that occurs.
After such a phase, a winning condition is evaluated that compares the effect of the
chosen lasso's cycle with that of previous phases.
Now either one player immediately wins or the next phase starts, but then the steepness
of the observed effect must have strictly decreased.
The number of different effects of simple cycles thus bounds the maximal length of a game.

\begin{definition}[Slope Game]\label{def:slope_game}
  A \emph{Slope Game} is a strictly alternating two player game played on a pair
  $\net,\net'$ of one-counter nets. The game positions are pairs
  $(\pi,(\rho,\rho'))$, where $\pi$ is an acyclic path in the product control graph
  of $\net$ and $\net'$, and $(\rho,\rho')$ is a positive vector which we call
  \emph{slope}.

The game is divided into \emph{phases}, each starting with a path
$\pi=(q_0,q'_0)$ of length $0$.
Until a phase ends, the game proceeds in rounds like a Simulation Game, but the players pick transition rules instead of transitions:
in a position $(\pi,(\rho,\rho'))$ where $\pi$ ends in states $(q,q')$, \R\ chooses a transition rule
$q \step{a, d} p$,
then \V\ responds with a transition rule $q' \step{a, d} p'$.
If the extended path $\pi'=\pi \step{a,d,d'}(p,p')$ is still not a lasso,
the next round continues from the updated position $(\pi',(\rho,\rho'))$; otherwise the phase ends
with \emph{outcome} $(\pi',(\rho,\rho'))$.
The slope $(\rho,\rho')$ does not restrict the possible moves of either player, nor changes during a phase. 
We thus speak of \emph{the slope of a phase}.

If a round ends in position $(\pi, (\rho, \rho'))$ where $\pi$ is a lasso,
then the winning condition is evaluated.
We distinguish three non-intersecting cases
depending on how the effect $\Delta(\cycl(\pi))=(\alpha,\alpha')$ of the lasso's cycle relates to
$(\rho,\rho')$: 

\begin{enumerate}
  \item If $(\alpha,\alpha')$ is not behind $(\rho, \rho')$, \V\ wins immediately.
  \item If $(\alpha,\alpha')$ is behind $(\rho, \rho')$ but not positive, \R\ wins immediately.
  \item If $(\alpha,\alpha')$ is behind $(\rho, \rho')$ and positive, the game continues with a new
        phase from position $(\pi',(\alpha,\alpha'))$, where $\pi'$ is the path of length $0$
        consisting of the pair of ending states of $\pi$.
\end{enumerate}

Figure~\ref{fig:slope-game} illustrates the winning condition.
Note that if there is no immediate winner it is guaranteed that $(\alpha, \alpha')$ is a positive
vector.
\end{definition}

The fundamental intuition for the connection between the Slope Game and the 
Simulation Game is as follows.
The Slope Game from initial position $((q, q'), (\rho, \rho'))$ determines how
the initial slope $(\rho, \rho')$ relates to the belt in the plane for $(q,q')$
in the simulation relation.
Roughly speaking, if $(\rho,\rho')$ is less steep than the belt then \R\ wins; if $(\rho,\rho')$
is steeper then \V\ wins.
Finally, when the initial slope $(\rho, \rho')$ is exactly as steep as the belt, any player may win the Slope Game.

Consider a Simulation Game in which the ratio $n/n'$ 
of the counter values of \R\ and \V\ is the same as the ratio
$\rho/\rho'$, i.e. suppose $(n,n')$ is contained in the direction of $(\rho,\rho')$.
Suppose also that the values $(n,n')$ are sufficiently large.
By monotonicity, we know that the steeper the slope $(\rho,\rho')$, the better for \V.
Hence if the effect $(\alpha,\alpha')$ of some cycle is behind $(\rho,\rho')$ and positive, 
then it is beneficial for \R\ to repeat this cycle. With more and more repetitions, 
the ratio of the counter values will get arbitrarily close to $(\alpha,\alpha')$.
On the other hand, if $(\alpha,\alpha')$ is behind $(\rho,\rho')$ but not positive then \R\ wins
by repeating the cycle until the \V's counter decreases to $0$.
Finally, if the effect of the cycle is not behind $(\rho,\rho')$ then repeating this cycle leads to \V's win. 

The next lemma follows from the observation that in Slope Games,
the slope of a phase must be strictly less steep than those of all previous
phases.

\begin{lemma}\label{lem:slopegame-bounds}
    For a fixed pair $\net,\net'$ of \OCN,
    \begin{enumerate}
      \item any Slope Game ends after at most $(\qq+1)^2$ phases, and
      \item Slope Games are effectively solvable in \pspace.
    \end{enumerate}
\end{lemma}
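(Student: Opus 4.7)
The plan is to exploit the key monotonicity built into the Slope Game: whenever a phase ends without an immediate winner, the new slope is strictly less steep than (``behind'') the slope of the previous phase. This built-in descent is the engine for both parts of the lemma.

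For part~1 I would first note that the slope of every phase after the first is the effect of the cycle portion of the lasso just completed, which is a simple cycle in the finite product control graph. Since that graph has $\qq=|Q\x Q'|$ nodes and each transition contributes at most $1$ in absolute value per coordinate, such a cycle effect lies in $\{-\qq,\dots,\qq\}^2$; the case analysis of the winning condition moreover forces it to be positive, hence in $\{0,\dots,\qq\}^2 \setminus \{(0,0)\}$, a set of size $(\qq+1)^2-1$. I would then observe that on positive vectors the ``behind'' relation is a strict total order on directions (and hence transitive), so the sequence of slopes encountered during play has pairwise distinct directions. Accounting separately for the initial, possibly unbounded, slope then yields the claimed bound of $(\qq+1)^2$ phases.

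For part~2 I would give an alternating polynomial-time decision procedure, which suffices since alternating polynomial time equals \pspace. A game position $(\pi,(\rho,\rho'))$ is polynomially sized: the acyclic path $\pi$ has at most $\qq$ nodes; all slopes after the first have coordinates bounded by $\qq$; and the initial slope is part of the input. Each round consists of one existential choice by \R\ followed by one universal choice by \V\ among polynomially many transition rules, while updating the path, detecting a lasso, computing its cycle effect, and testing the ``behind''/positivity conditions are all polynomial-time routines. A single phase terminates within at most $\qq$ rounds, because adding more than $\qq$ edges to an acyclic path in a graph with $\qq$ nodes necessarily creates a lasso; combined with part~1 this bounds the total number of rounds (and hence the algorithm's runtime) by $\qq(\qq+1)^2$, which is polynomial in the input size.

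The main subtlety I anticipate lies in part~1: one must argue about directions rather than vectors, since collinear positive vectors of different lengths are incomparable under ``behind''. The paper's remark that being behind depends only on direction makes this clean: the strict descent in steepness is actually a strict descent in direction, so no direction can recur, giving the desired bound.
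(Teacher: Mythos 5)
Your proof is correct and follows essentially the same route as the paper's (cycle effects of simple cycles give polynomially many bounded positive vectors; strict descent in steepness bounds the number of phases; this plus polynomial-size positions yields a \pspace\ procedure). You make explicit two points the paper treats tersely: that the descent is in \emph{directions} rather than vectors (and that ``behind'' restricted to positive vectors is indeed a strict total order on directions, hence transitive — this relies on positive vectors living in a single quadrant, so the cumulative clockwise angle cannot wrap past $180^\circ$), and that the initial slope, which is part of the input and not $\qq$-bounded, must be counted separately. For part~2 you substitute an explicit alternating polynomial-time argument (via $\text{APTIME}=\pspace$) for the paper's ``exhaustive search,'' which is a cleaner way to land the bound but not a different idea. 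No gaps.
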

\begin{proof}
  After every phase, the slope $(\rho, \rho')$ is equal to the effect of a simple
  cycle, which must be a positive vector.  Thus the absolute values of both numbers
  $\rho$ and $\rho'$ are bounded by $\qq=\card{Q\x Q'}$. It follows that the total
  number of different possible values for $(\rho,\rho')$, and therefore the maximal
  number of phases played, is at most $(\qq + 1)^2$.
  This proves the first part of the claim. Point 2 is a direct consequence
  as one can find and verify winning strategies by an exhaustive search.
\end{proof}

\subparagraph{Strategies in Slope Games.}
Consider one phase of a Slope Game, starting from a position $(\pi, (\rho, \rho'))$.
The phase ends with a lasso whose cycle effect $(\alpha,\alpha')$ satisfies 
exactly one of three conditions, as examined by the evaluating function.
Accordingly, depending on its initial position, 
every phase falls into exactly one of three disjoint cases:

\begin{enumerate}
\item \R\ has a strategy to win the Slope Game immediately,
\item \V\ has a strategy to win the Slope Game immediately or
\item neither \R\ nor \V\ have a strategy to win immediately. 
\end{enumerate}

\noindent
In case 1.~or 2.~we call the phase \emph{final}, and in case 3.~we call it \emph{non-final}.
%
The non-final phases are the most interesting ones because in those,
both players have a strategy that at least prevents an immediate loss.

\subparagraph*{Strategy Trees.}
Both in final and non-final phases, a strategy for \R\ or \V\ is a tree as described below. 
For the definition of strategy trees we need to consider,
not only \R's positions $(\pi, (\rho, \rho'))$ but also \V's positions, the
intermediate positions within a single round. These intermediate positions may be modelled
as triples $(\pi, (\rho,\rho'), t)$ where $t$ is a transition rule in $\net$ from the last state of $\pi$.
Observe that the bipartite directed graph, with positions of a phase as vertices and edges determined by the single-move relation, 
is actually a tree, call it $T$. 
Thus a \R-strategy, i.e.~a subgraph of $T$ containing exactly one successor of every \R's position and all successors
of every \V's position, is a tree as well; and so is any strategy for \V.

Such a strategy (tree) in the Slope Game naturally splits into \emph{segments}, each segment being a strategy (tree) in one phase.
The segments themselves are also arranged into a tree, which we call \emph{segment tree}.
Irrespectively which player wins a Slope Game, according to the above observations,
this player's winning strategy contains segments of two kinds:

\begin{itemize}
  \item non-leaf segments are strategies to either win immediately or continue the Slope
        Game (these are strategies for non-final phases); 
  \item leaf segments are strategies to win the Slope Game immediately (these are strategies in final phases).
\end{itemize}

By the \emph{segment depth} of a strategy we mean the depth of its segment tree.
By Lemma~\ref{lem:slopegame-bounds}, Point 1, we know that a Slope Game ends after at most
$\dmax = (\qq + 1)^2$ phases.
Consequently, the segment depths of strategies are at most $\dmax$ as well.

A value of $c=\qq\cdot\dmax$ is sufficient for the claim of
Theorem~\ref{thm:belt-theorem}.
The intuition behind this value is that for a winning player in the Slope Game,
an excess of $\qq$ per phase is sufficient to be able to safely ``replay'' a
winning strategy in the Simulation Game.
Formally, this is stated by the following two crucial lemmas, proofs of which
can be found in Appendix~\ref{sec:app_proofs}.

\begin{lemma}\label{lem:key-lemma-R}
  Suppose \R\ has a winning strategy of segment depth $d$ in the Slope Game from a
  position $((q,q'), (\rho, \rho'))$. Then \R\ wins the Simulation Game from every
  position $(q n, q' n')$ which is $(\qq \cdot d)$-below $(\rho,\rho')$.
\end{lemma}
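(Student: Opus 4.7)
The plan is to proceed by induction on the segment depth $d$ of \R's winning Slope Game strategy $\sigma$, building a Simulation Game strategy that mimics $\sigma$ phase by phase. The budget $c=\qq\cdot d$ is designed so that each of the at most $d$ phases can consume up to $\qq$ units of counter excess: this matches the fact that each phase corresponds to a lasso of length at most $\qq$ in the product control graph, so within a single phase the counter values can change by at most $\qq$ in each coordinate.

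For the base case $d=1$, $\sigma$ consists of a single final segment, meaning that against every \V-response the phase ends in a lasso whose cycle effect $(\alpha,\alpha')$ is behind $(\rho,\rho')$ and not positive. A short cross-product computation (using behind $\equiv \rho\alpha'<\rho'\alpha$, $(\rho,\rho')$ positive, and $(\alpha,\alpha')$ not positive) shows this forces $\alpha'<0$. In the Simulation Game, \R\ plays $\sigma$ literally; the lasso has length at most $\qq$ and $n>\qq$ (since $(n,n')$ is $\qq$-below $(\rho,\rho')$), so her moves remain enabled throughout the phase. Once the lasso is formed, she repeats its cycle indefinitely. Lemma~\ref{lem:preserve-above} keeps the configuration $c$-below $(\rho,\rho')$, which by a brief geometric argument keeps \R's counter non-negative, while $\alpha'<0$ makes \V's counter strictly decrease each iteration, so \V\ becomes stuck in finitely many rounds.

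For the inductive step $d>1$, \R\ plays the first phase of $\sigma$ exactly, producing a lasso $\pi_1$ whose cycle effect $(\alpha,\alpha')$ is behind $(\rho,\rho')$. Applying Lemma~\ref{lem:preserve-above} to $(\alpha,\alpha')$ and absorbing the bounded prefix effect $|\Delta(\pref(\pi_1))|\le\qq$ shows that the resulting position $(m,m')$ is $\qq(d-1)$-below $(\rho,\rho')$. If $(\alpha,\alpha')$ is not positive, the base-case argument closes the game from $(pm, p'm')$; otherwise $(\alpha,\alpha')$ is positive and behind $(\rho,\rho')$, a new phase with slope $(\alpha,\alpha')$ begins, and the restricted sub-strategy $\sigma'$ of segment depth $d-1$ from $((p,p'),(\alpha,\alpha'))$ is still winning, so the plan is to invoke the induction hypothesis on $\sigma'$ from $(pm, p'm')$.

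The main obstacle I foresee is the geometric bookkeeping in the inductive step. The recursive invocation requires $(m,m')$ to be $\qq(d-1)$-below the \emph{new} slope $(\alpha,\alpha')$, whereas the direct deduction above only gives being below the old slope $(\rho,\rho')$; since $(\alpha,\alpha')$ is less steep than $(\rho,\rho')$, the condition "below $(\alpha,\alpha')$" is strictly more restrictive than "below $(\rho,\rho')$". Closing this gap requires careful arithmetic that exploits the bound $\qq$ on the length of each phase (and hence on $|\alpha|,|\alpha'|$) together with the inductive counter budget, so that the invariant actually propagates across all $d$ phases. A secondary concern is that in the Simulation Game \V\ may deviate from the Slope Game's transition choices; this is resolved by the tree structure of $\sigma$, which already contains a sub-strategy for every \V-response within a phase, but the counter accounting must be done uniformly over all deviations.
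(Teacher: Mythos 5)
Your intuition about the budget $\qq\cdot d$, the length bound on lassos, and the tree structure of the Slope Game strategy is correct, but the proposal has two genuine gaps, both of which the paper solves by a device (\emph{rollbacks}) that you do not describe.

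In the base case $d=1$ you say \R\ ``repeats its cycle indefinitely.'' This is not a well-defined strategy in the Simulation Game: \V\ controls every other move, so \R\ cannot force the same cycle to reappear. After the lasso forms, the Slope Game strategy $T$ is exhausted for that phase, so it does not tell \R\ what to do either. The paper's resolution is that \R\ \emph{rolls back}: she resets her imaginary Slope Game position from the full lasso $\pi_1 = \widetilde\pi_1\bar\pi_1$ to the acyclic prefix $\widetilde\pi_1$ and then continues playing $T$. Against \V's new (possibly different) responses this produces a \emph{new} lasso $\pi_2$, perhaps with a completely different cycle $\bar\pi_2$. The crucial algebraic observation is that, because of how the rollback re-traces, the effect of $\widetilde\pi_1$ cancels, so after $k$ rollbacks the counter values are $n + \widetilde\alpha_k + \sum_{i\le k}\bar\alpha_i$ and similarly for $n'$ --- only the cycle effects accumulate. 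Since all of them are behind $(\rho,\rho')$ and non-positive, the partial sums stay behind $(\rho,\rho')$ (so \R's counter stays non-negative by Lemma~\ref{lem:preserve-above}) and \V's coordinate strictly decreases, forcing a win. Without the rollback, there is no complete strategy for \R\ after the first lasso.

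In the inductive step you correctly locate the central difficulty --- after one phase the position is $\qq(d-1)$-below the \emph{old} slope $(\rho,\rho')$, not the \emph{new} slope $(\alpha,\alpha')$, and since $(\alpha,\alpha')$ is less steep this is in general not good enough. However, ``careful arithmetic'' on a single phase cannot close this gap: a point far out in the direction of $(\rho,\rho')$ and $c$-below it can be arbitrarily far from being below the less steep $(\alpha,\alpha')$, and the effect of one lasso is bounded by $\qq$. The paper's fix again relies on rollbacks: \R\ keeps rolling back in the top segment, accumulating cycle effects (some positive, some not). Because there are only finitely many simple cycle effects, after sufficiently many rollbacks the current counter values become $\qq(d-1)$-below some vector \emph{equivalent} (in the sense of Definition~\ref{def:vector-equivalence}) to the effect of the last positive rolled-back cycle. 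At that point Lemma~\ref{lem:constant-winner} lets \R\ switch her Slope Game strategy to one winning from this equivalent slope with segment depth $\le d-1$, and the induction hypothesis applies. This descent through the segment tree, enabled by rollbacks and the equivalence-class argument, is the missing structural ingredient; it is not a purely arithmetical refinement of the single-phase plan you propose.
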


\begin{lemma}\label{lem:key-lemma-V}
    Suppose \V\ has a winning strategy of segment depth $d$ in the Slope Game from a
    position $((q,q'), (\rho, \rho'))$. Then \V\ wins the Simulation Game from every
    position $(q n, q' n')$ which is $(\qq \cdot d)$-above $(\rho,\rho')$.
\end{lemma}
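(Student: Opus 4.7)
The plan is to proceed by induction on the segment depth $d$. The base case $d = 0$ is vacuous. For the inductive step, suppose \V\ has a winning Slope Game strategy $\sigma$ from $((q, q'), (\rho, \rho'))$ of segment depth $d \geq 1$. I would have \V\ replay $\sigma$ in the Simulation Game phase-by-phase: she tracks the current partial path in the product control graph and responds to each of \R's transition choices using the segment of $\sigma$ for the current phase. Each phase closes a lasso within $\qq$ rounds (since the product graph has $\qq$ nodes), during which each counter changes by at most $\qq$ in absolute value. The hypothesis that $(qn, q'n')$ is $(\qq d)$-above $(\rho, \rho')$ gives $n' > \qq d \geq \qq$, so \V's counter stays strictly positive throughout the first phase.

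When the lasso closes with cycle effect $(\alpha, \alpha')$, the winning case for \R\ in Definition~\ref{def:slope_game} cannot arise because $\sigma$ is winning for \V. In the non-final case, $(\alpha, \alpha')$ is behind $(\rho, \rho')$ and positive, and $\sigma$ prescribes a winning continuation of segment depth $d - 1$ from the position $((q_i, q'_i), (\alpha, \alpha'))$. I would reduce to the induction hypothesis via the direct characterization that $(n, n')$ is $c$-above $(\rho, \rho')$ iff $\rho n' - \rho' n > c(\rho + \rho')$: since $(\alpha, \alpha') \lesssteep (\rho, \rho')$ and all coordinates are nonnegative, being $(\qq d)$-above $(\rho, \rho')$ passes to being $(\qq d)$-above $(\alpha, \alpha')$; and the subsequent shift by the phase vector $(\beta, \beta')$ with $|\beta|, |\beta'| \leq \qq$ can reduce $\alpha n' - \alpha' n$ by at most $\qq(\alpha + \alpha')$, leaving the post-phase counter pair $(n + \beta, n' + \beta')$ precisely $(\qq(d - 1))$-above $(\alpha, \alpha')$. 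The induction hypothesis then produces a winning continuation.

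The main obstacle is the final case, where $(\alpha, \alpha')$ is not behind $(\rho, \rho')$. \V\ has won the Slope Game outright, but she must still sustain infinite play in the Simulation Game. My intended strategy is to have \V\ loop along the detected cycle: each iteration shifts the position by $(\alpha, \alpha')$, which by Lemma~\ref{lem:preserve-above}(2) preserves the $(\qq(d-1))$-above property relative to $(\rho, \rho')$, while intermediate counter drops within a single iteration are bounded by $\qq$. The delicate issue is that \R\ need not cooperate with the cycle; if she deviates, another lasso closes within $\qq$ further rounds, and \V\ must still respond winningly. I plan to strengthen the inductive statement so it asserts \V's Simulation Game win from \emph{every} sufficiently-above state at which she possesses some winning Slope Game strategy of depth $\leq d$, and then argue, using the same bookkeeping on counter margins, that every deviation-induced lasso either has a not-behind cycle effect (maintaining the invariant) or is behind and positive (in which case the segment depth strictly decreases and the induction hypothesis applies). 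Threading this case analysis consistently through all possible \R-plays is the crux of the work.
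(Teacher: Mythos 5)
Your overall architecture — induction on segment depth, phase-by-phase replay of \V's Slope Game strategy, and the arithmetic reduction $(\qq d)$-above $(\rho,\rho') \Rightarrow (\qq d)$-above $(\alpha,\alpha') \Rightarrow (\qq(d{-}1))$-above $(\alpha,\alpha')$ after the phase shift — matches the paper's argument, and the algebraic characterization $\rho n' - \rho' n > c(\rho+\rho')$ of being $c$-above is a legitimate (and clean) way to carry out the transfer step. The non-final case is essentially correct.

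The genuine gap is in the leaf/final case, and more broadly whenever a lasso closes whose cycle effect is \emph{not} behind $(\rho,\rho')$. You propose that \V\ ``loop along the detected cycle,'' but this is not an available move: which transitions are traversed is dictated by \R, so \V\ cannot force the play to revisit a chosen cycle. You notice this and suggest strengthening the induction hypothesis, but the strengthened statement (\V\ wins from sufficiently-above states admitting a Slope Game strategy of depth $\le d$) is equivalent to the original lemma and does not by itself say \emph{how} \V\ continues to play after a not-behind cycle closes. The missing device is the \emph{rollback}: when the bookkeeping path $\pi$ becomes a lasso $\pi = \tilde\pi\,\bar\pi$ with $\Delta(\bar\pi)$ not behind $(\rho,\rho')$, \V\ resets the recorded Slope Game path to the acyclic prefix $\tilde\pi$ (legal because a lasso ends at the state where its cycle begins) and keeps playing the \emph{same} segment strategy from there. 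After $k$ such rollbacks the counter values are $(n,n') + \Delta(\tilde\pi_k) + \sum_{i\le k}\Delta(\bar\pi_i)$; since every rolled-back $\Delta(\bar\pi_i)$ is not behind, Lemma~\ref{lem:preserve-above}(2) shows $(n,n')+\sum_i\Delta(\bar\pi_i)$ is still $(\qq d)$-above, and the acyclic prefix contributes at most $\qq$ per coordinate, keeping every visited point at least $(\qq(d{-}1))$-above. In a leaf segment all cycle effects are not-behind, so rollbacks can continue forever and \V\ wins by infinite play; in a non-leaf segment, rollbacks are applied until a behind-and-positive cycle closes, at which point your induction step takes over.

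Note also that this same issue infects your non-final case as written: a non-final phase is one where \emph{neither} player wins immediately, so depending on \R's moves the observed cycle may still be not-behind; your current reduction silently assumes it is behind-and-positive on the first lasso. The rollback handles this uniformly — roll back not-behind cycles, descend to depth $d{-}1$ only when a behind-and-positive cycle appears.
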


\subsection{Proof of Theorem~\ref{thm:belt-theorem}}\label{ssec:belt-theorem-proof}
Let $c=\qq\cdot d_{max}$. For any two states $q\in Q$ and $q'\in Q'$ of the nets
$\net$ and $\net'$ we will determine the ratio $(\rho,\rho')$ that, together with
$c$, characterises the belt of the plane $(q,q')$.
First observe the following monotonicity property of the Slope Game.

\begin{lemma}
  If \R\ wins the Slope Game from a position $((q,q'), (\rho,\rho'))$ and
  $(\sigma,\sigma')$ is less steep than $(\rho,\rho')$ then \R\ also wins the Slope
  Game from $((q,q'), (\sigma,\sigma'))$.
\end{lemma}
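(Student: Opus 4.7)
The plan is to prove the claim by contradiction, invoking the two key lemmas (Lemma~\ref{lem:key-lemma-R} and Lemma~\ref{lem:key-lemma-V}) together with determinacy of both games.

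First, the Slope Game is determined: by Lemma~\ref{lem:slopegame-bounds}(1) every play terminates within $\dmax=(\qq+1)^2$ phases, so the game tree is finite and Zermelo's theorem applies. Hence if Spoiler does not win from $((q,q'),(\sigma,\sigma'))$, then Duplicator does, say with a strategy of segment depth $d_\V\le\dmax$. Meanwhile Spoiler's assumed winning strategy from $((q,q'),(\rho,\rho'))$ has some segment depth $d_\R\le\dmax$. Set $c=\qq\cdot\dmax$, which bounds both $\qq\cdot d_\R$ and $\qq\cdot d_\V$.

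Applying Lemma~\ref{lem:key-lemma-R} to Spoiler's strategy gives that Spoiler wins the Simulation Game from every $(qn,q'n')$ that is $c$-below $(\rho,\rho')$. Applying Lemma~\ref{lem:key-lemma-V} to Duplicator's strategy gives that Duplicator wins it from every $(qn,q'n')$ that is $c$-above $(\sigma,\sigma')$. Since $(\sigma,\sigma')$ is strictly less steep than $(\rho,\rho')$, their directions are distinct rays, and the open wedge lying strictly above the direction of $(\sigma,\sigma')$ and strictly below the direction of $(\rho,\rho')$ has positive opening angle and extends to infinity. It therefore contains lattice points $(n,n')\in\N\x\N$ that are simultaneously $c$-below $(\rho,\rho')$ and $c$-above $(\sigma,\sigma')$. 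At any such point both players win the Simulation Game, contradicting its determinacy. Hence Spoiler wins the Slope Game from $((q,q'),(\sigma,\sigma'))$.

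The main obstacle, and the reason for taking this indirect route, is that a direct structural induction replaying Spoiler's $(\rho,\rho')$-winning strategy inside the $(\sigma,\sigma')$-game fails. A phase-$1$ lasso may yield a positive cycle effect whose direction lies strictly between those of $(\sigma,\sigma')$ and $(\rho,\rho')$: it is behind $(\rho,\rho')$ but not behind $(\sigma,\sigma')$, so Duplicator wins that very phase under the $(\sigma,\sigma')$-winning condition, and Spoiler's surviving sub-strategy starts from the lasso's ending state rather than from $(q,q')$, which blocks a straightforward recursive fix. Passing through the Simulation Game, where monotonicity in each counter is immediate from the observation in Section~\ref{sec:problem}, sidesteps this cleanly.
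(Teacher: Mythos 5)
Your proof is correct and follows essentially the same route as the paper: assume by determinacy that \V\ wins the Slope Game from $((q,q'),(\sigma,\sigma'))$, pick a point that is simultaneously $c$-below $(\rho,\rho')$ and $c$-above $(\sigma,\sigma')$ in the wedge between the two directions, and apply Lemmas~\ref{lem:key-lemma-R} and~\ref{lem:key-lemma-V} to contradict determinacy of the Simulation Game. The explicit appeal to Zermelo's theorem and the closing remarks on why a direct strategy-transfer would fail are welcome elaborations, but the core argument is identical to the paper's.
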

\begin{proof}
  Assume that \R\ wins the Slope Game from $((q,q'), (\rho,\rho'))$ while
  \V\ wins from $((q,q'), (\sigma,\sigma'))$, for some $(\sigma,\sigma')\lesssteep (\rho,\rho')$.
  Observe that in both cases, winning strategies of segment depth $\le \dmax$ exist.
  As $(\sigma,\sigma')$ is less steep than $(\rho,\rho')$,
  there is a point $(n,n') \in \N\times\N$ which is both $c$-above $(\sigma,\sigma')$
  and $c$-below $(\rho,\rho')$.
  Applying both Lemma~\ref{lem:key-lemma-R} and~\ref{lem:key-lemma-V}
  immediately yields a contradiction.
\end{proof}
Equivalently, if \V\ wins the Slope Game from $((q,q'), (\rho,\rho'))$ and
$(\sigma,\sigma')$ is steeper than $(\rho,\rho')$ then \V\ also wins the Slope Game
from $((q,q'), (\sigma,\sigma'))$. We conclude that for every pair $(q,q')$ of
states, there is a \emph{boundary slope} $(\beta,\beta')$ such that

\begin{enumerate}
  \item \label{obs:steeper} \R\ wins the Slope Game from $((q,q'), (\sigma,\sigma'))$
      for every $(\sigma,\sigma')$ less steep than $(\beta,\beta')$;
  \item \label{obs:less-steep} \V\ wins the Slope Game from $((q,q'), (\sigma,\sigma'))$
      for every $(\sigma,\sigma')$ steeper than $(\beta,\beta')$.
\end{enumerate}

Note that we claim nothing about the winner from the position $((q,q'), (\beta,\beta'))$ itself.
Applying Lemmas~\ref{lem:key-lemma-R} and~\ref{lem:key-lemma-V} we see that this boundary slope
$(\beta,\beta')$ satisfies the claims 1 and 2 of Theorem~\ref{thm:belt-theorem}.
Indeed, consider a pair $(n,n')\in\N\x\N$ of counter values.
If $(n,n')$ is $c$-below $(\beta,\beta')$, then there is certainly a line $(\bar{\beta},\bar{\beta}')$
less steep than $(\beta,\beta')$ such that $(n,n')$ is $c$-below $(\bar{\beta},\bar{\beta}')$.
By point~\ref{obs:steeper} above, \R\ wins the Slope Game from
$((q,q'), (\bar{\beta},\bar{\beta}'))$. By Lemma~\ref{lem:key-lemma-R}, \R\ wins the
Simulation Game from $(qn,q'n')$.
Analogously, one can use point~\ref{obs:less-steep} above together with Lemma~\ref{lem:key-lemma-V}
to show Point 2 of Theorem~\ref{thm:belt-theorem}.

It remains to show that the boundary slope $(\beta,\beta')$ is polynomial in the sizes of $\net$ and $\net'$.
We show that $(\beta,\beta')$ must in fact be the effect of a simple cycle.
Because such cycles are no longer than $K=|Q\x Q'|$
and because along a path of length $K$ the counter values cannot change
by more than $K$, we conclude that $-K\le \beta,\beta'\le K$.

\begin{definition}[Equivalent vectors]\label{def:vector-equivalence}
Consider all the non-zero effects $(\alpha,\alpha')$ of all cycles together with their
opposite vectors $(-\alpha,-\alpha')$ and denote the set of all these vectors by $V$.
Call two positive vectors $(\rho,\rho')$ and $(\sigma,\sigma')$ \emph{equivalent} if
for all $(\alpha,\alpha') \in V$,
\begin{equation}
    (\alpha,\alpha') \text{ is behind } (\rho,\rho')
    \iff (\alpha,\alpha') \text{ is behind } (\sigma,\sigma').
\end{equation}
\end{definition}
In other words, equivalent vectors lie in the same angle determined by a pair of
vectors from $V$ that are neighbours angle-wise.
We claim that equivalent slopes have the same winner in the Slope Game:

\begin{lemma} \label{lem:constant-winner}
  If $(\rho,\rho')$ and $(\sigma,\sigma')$ are equivalent then the same player wins the Slope Game
  from $((q,q'),(\rho,\rho'))$ and $((q,q'),(\sigma,\sigma'))$.
\end{lemma}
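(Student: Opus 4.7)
The key observation is that within a single phase of the Slope Game the legal moves of both players depend only on the current acyclic path $\pi$, and not on the slope component of the position. The slope influences the game only at a phase boundary, where the winning condition compares the cycle effect $(\alpha,\alpha')$ of the just-completed lasso to the slope. So my plan is to show that the first-phase game trees starting from $((q,q'),(\rho,\rho'))$ and $((q,q'),(\sigma,\sigma'))$ are identical up to the slope annotation, and that equivalence of the slopes forces the phase-end verdicts to coincide.

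Concretely I would first argue that every play of the first phase from $((q,q'),(\rho,\rho'))$ corresponds to a unique play from $((q,q'),(\sigma,\sigma'))$ with the same sequence of chosen transition rules, ending with the same lasso $\pi$ and hence the same cycle effect $(\alpha,\alpha')=\ce(\cycl(\pi))$. Then I would do the case analysis at the end of the phase. If $(\alpha,\alpha')=(0,0)$, then it has no direction and is therefore behind neither slope, so \V\ wins immediately in both games. If $(\alpha,\alpha')\neq (0,0)$, then $(\alpha,\alpha')\in V$ by the definition of $V$, and the equivalence of $(\rho,\rho')$ and $(\sigma,\sigma')$ guarantees that the predicate ``$(\alpha,\alpha')$ is behind the slope'' evaluates to the same truth value in both games. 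Since positivity of $(\alpha,\alpha')$ does not involve the slope, exactly the same one of the three clauses of Definition~\ref{def:slope_game} fires in both games.

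In clauses 1 and 2 the same player wins immediately in both games, whereas in clause 3 both games transition to the identical position $((q'',q'''),(\alpha,\alpha'))$, where $(q'',q''')$ are the terminal states of $\pi$. Two games starting from the same position have the same winner, so the continuations agree and, combined with the identical first-phase structure, the two original games have the same winner. (One may also phrase this as an induction on segment depth, which is bounded by $\dmax$ via Lemma~\ref{lem:slopegame-bounds}, but the observation just made makes even that unnecessary.) The only delicate point is bookkeeping around edge cases: I need to handle the degenerate situation of a cycle with zero effect (which lies outside $V$ and so is not directly covered by the equivalence hypothesis), and to check that the continuation position in clause 3 really is \emph{exactly} the same on both sides, which relies on the fact that the new slope depends only on $\pi$ and not on the old slope.
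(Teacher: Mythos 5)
Your argument is correct and is essentially the same as the paper's: the paper also observes that the winning strategy from $((q,q'),(\rho,\rho'))$ can be reused verbatim from $((q,q'),(\sigma,\sigma'))$ because equivalence of the slopes forces the first-phase outcomes to be evaluated identically, and continuation positions carry a fresh slope that does not depend on the old one. Your treatment of the degenerate zero-effect cycle (outside $V$, hence behind neither slope, so \V\ wins in both games) is a detail the paper leaves implicit but does not change the argument's structure.
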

\begin{proof}
  A winning strategy in the Slope Game from $((q,q'),(\rho,\rho'))$ may be literally
  used in the Slope Game from $((q,q'),(\sigma,\sigma'))$.
  This holds because the assumption that $(\rho,\rho')$ and
  $(\sigma,\sigma')$ are equivalent implies that all possible outcomes of the initial phase of the
  Slope Game are evaluated equally.
\end{proof}
Lemma~\ref{lem:constant-winner} implies that the boundary slope is in $V$.
This concludes the proof of Theorem~\ref{thm:belt-theorem}.\qed

\subsection{A Sharper Estimation}\label{sec:sharper:estimation}

Theorem~\ref{thm:belt-theorem} provides a polynomial bound on
the constant $c$ and the slopes of all belts, with respect to the sizes of $\net$ and
$\net'$.
However, the proof of Theorem~\ref{thm:belt-theorem} reveals that a slightly stronger
result actually holds, which will be useful in proving the complexity bound for
weak simulation in Section~\ref{sec:weaksim}.
We can estimate a bound on $c$ in terms of the following two parameters of the
product control graph $\net\x\net'$:
\begin{itemize}
  \item $\text{\sc scc}$, the size of the largest strongly connected component, and
  \item $\text{\sc acyc}$, the length of the longest acyclic path.
\end{itemize}
In particular, we claim that Theorem~\ref{thm:belt-theorem} still holds with
the constant $c$ bounded by
\begin{equation}
    c \leq poly(\text{\sc scc}) + \text{\sc acyc}.
\end{equation}

Intuitively, $c$ is the excess of counter value needed to replay a Slope Game strategy in the Simulation Game.
This directly corresponds to the maximal number of alternations in a play of the Slope Game.
Every phase ends in a cycle, which must be contained in some strongly connected component
and is thus no longer than $\text{\sc scc}$.
So the segment depth of Slope Game strategies is bounded by $(\text{\sc scc}+1)^2$.

We can decompose plays of the Slope Game by separating subpaths that
contain at least one cycle and stay in one strongly connected component, and
the remaining subpaths.
One can now show that in fact, a counter value of $\text{\sc scc}$
suffices to enable subpaths of the first kind. The segment depth bounds
the number of such subpaths in any play.
Secondly, by definition, the subpaths of the second kind cannot share any points.
The sum of their lengths is hence bounded by $\text{\sc acyc}$.
We conclude that a value of $c = (\text{\sc scc}+1)^2 \cdot \text{\sc scc} + \text{\sc acyc}$
is sufficient.

\section{Strong Simulation is PSPACE-complete}\label{sec:pspace}

\newcommand\slope{\text{\sc slope}}
\newcommand\belt{\text{\sc belt}}
\newcommand\abov{\text{\sc above}}
\newcommand\is{L_0}
\newcommand\per{\text{\sc periodic}}
\newcommand \squa[3]{\text{\sc rect}(#1, #2, #3)}
\newcommand\init{\text{\sc init}}
\newcommand\aper{\text{\sc aperiodic}}

Using our stronger version of the Belt Theorem from Section 4,
we derive an algorithm for checking simulation preorder, similarly as in
\cite{AC1998,JM1999,JKM2000}.

As before we fix two \OCN\ $\net$ and $\net'$, with sets of control-states $Q$ and $Q'$, respectively.
By Lemma~\ref{lem:slopegame-bounds}, Point 2, 
we can compute in \pspace, for every pair $(q,q')\in Q \times Q'$, the positive vector $(\rho,\rho')$
satisfying Theorem~\ref{thm:belt-theorem}; we denote this vector by $\slope(q,q')$.
We define $\belt(q,q')$ to be the set of points $(n, n') \in \N^2$ that are neither
$c$-above nor $c$-below $\slope(q,q')$.
As all vectors $\slope(q,q')$ and the widths of all belts are polynomially bounded (by Theorem~\ref{thm:belt-theorem}), 
we observe that every two non-parallel belts are disjoint outside
a polynomially bounded \emph{initial rectangle}, denoted $\is$, between corners
$(0,0)$ and $(l_0, l_0')$ (see Figure~\ref{fig:belts}). 
\begin{figure}[ht]
    \centering
    \def\xmax{8}
  \def\ymax{6.5}
  \def\width{8.5}
  \def\height{6}
  \def\LOx{2cm}
  \def\LOy{1cm}
  \def\pcutn{1.25}
  \def\acutn{2}
  \def\acycColour{red!40}
  \def\periodColour{green!30}
  \def\periodBorderColour{green!20!black}
  \def\initColour{blue!20}

  \begin{tikzpicture}[scale=0.7]
    \path[use as bounding box](-1.5,-0.5) rectangle ($(\width,\height) + (2.5,0.75)$);
    \tikzstyle{every node}=[font=\small]

    \draw[step=0.25,blue!5!white, very thin] (0,0) grid (\width,\height);
    
    \fill[\initColour] (0,0) rectangle (\LOx,\LOy);
    \draw (\LOx,0) -- ($(\LOx,0) -(0,2pt)$) node[below] {\tiny $l_0$};
    \draw (0,\LOy) -- ($(0,\LOy) -(2pt,0)$) node[left] {\tiny $l_0'$};

    \begin{scope}
      \path[clip, name path=border] (0,0) -- (0,\height) -- (\width,\height) -- (\width,0);
      \coordinate (TL) at (-0.75,0);
      \coordinate (BL) at (0.75,0);
      \coordinate (TR) at ($(3.75,0) + (0,\acutn + 2*\pcutn)$);
      \coordinate (BR) at ($(5.25,0) + (0,\acutn + 2*\pcutn)$);
    
      \path[name path=upperbeltline] (TL) -- ($(TL)!5!(TR)$);
      \path[name path=lowerbeltline] (BL) -- ($(BL)!10!(BR)$);

      \path[name path=cutlevel] (0,\acutn) -- (\xmax,\acutn);
      \path[name intersections={of=upperbeltline and cutlevel}];
      \coordinate (ATL) at (intersection-1);
      \path [name intersections={of=lowerbeltline and cutlevel}];
      \coordinate (ATR) at (intersection-1);
      
      \path[name path=cutlevel] (0,\LOy) -- (\xmax,\LOy);
      \path [name intersections={of=upperbeltline and cutlevel}];
      \coordinate (ABL) at (intersection-1);
      \path [name intersections={of=lowerbeltline and cutlevel}];
      \coordinate (ABR) at (intersection-1);
    
      \draw[dotted] (TL) -- (ABL);
      \draw[dotted] (BL) -- (ABR);

      \path[name intersections={of=lowerbeltline and border}];
      \coordinate (MR) at (intersection-1);
      \path[name intersections={of=upperbeltline and border}];
      \coordinate (ML) at (intersection-2);

      \begin{scope}
          \path[clip] (TL) -- (ML) -- (MR) -- (BL);
          \path[fill=\acycColour] (ABL) rectangle (ATR);

          \path[name path=cutlevel] ($(0,\acutn + \pcutn)$) -- ($(\xmax,\acutn + \pcutn)$);
          \path [name intersections={of=lowerbeltline and cutlevel}];
          \coordinate (P1R) at (intersection-1);
          \path [name intersections={of=upperbeltline and cutlevel}];
          \coordinate (P1L) at (intersection-1);
          
          \path[name path=cutlevel] ($(0,\acutn + 2*\pcutn)$) -- ($(\xmax,\acutn + 2*\pcutn)$);
          \path [name intersections={of=lowerbeltline and cutlevel}];
          \coordinate (P2R) at (intersection-1);
          \path [name intersections={of=upperbeltline and cutlevel}];
          \coordinate (P2L) at (intersection-1);
          
          \fill[color=\periodColour,path fading=north] (ATL) rectangle (\xmax,\ymax);

          \path[draw,color=\periodBorderColour] (ATL) -- (ATR);
          \path[draw,color=\periodBorderColour] (P1L) -- (P1R);
          \path[draw,color=\periodBorderColour] (P2L) -- (P2R);
      \end{scope}
      \draw[dotted,path fading=north](BR) -- (MR);
      \draw[dotted,path fading=north](TR) -- (ML);
    \end{scope}

    \draw[->] (ABL) -- ($(ABL)!1.1!(TR)$);
    \draw[->] (ABR) -- ($(ABR)!1.1!(BR)$);

    \draw[decorate,decoration={brace,raise=0.2mm, amplitude=5pt}]
        ($(P1R)+(4mm,0mm)$)
        -- node[auto] {\ \per}
        ($(ATR)+(4mm,0.5mm)$);

    \draw[decorate,decoration={brace,raise=0.2mm, amplitude=5pt}]
        ($(ATR)+(4mm,0mm)$)
        -- node[auto] {\ \aper}
        ($(ABR)+(4mm,0mm)$);

    \node[gray] at ($(ABL)!0.5!(ATR)$) {$A$};
    \node[gray] at ($(ATL)!0.5!(P1R)$) {$P_1$};
    \node[gray] at ($(P1L)!0.5!(P2R)$) {$P_2$};

    \def\alength{1cm}
    \def\plength{1.5cm}
    \def\hbheight{0.75cm}
    \coordinate (ATL) at ($(\LOx,\hbheight)$);
    \coordinate (ABL) at ($(\LOx,0)$);
    \coordinate (ATR) at ($(\LOx+\alength,\hbheight)$);
    \coordinate (ABR) at ($(\LOx+\alength,0)$);
    
    \draw[dotted] (0,\hbheight) -- (ATL);
    \draw[dotted] (0,0) -- (ABL);
    \draw[dotted] (ATR) -- (\width,\hbheight);
    \begin{scope}
        \path[clip] (\LOx,\hbheight) -- (\width,\hbheight) -- (\width,0) -- (\LOx,0);
        \path[fill=\acycColour] (ABL) rectangle (ATR);

        \fill[color=\periodColour,path fading=east] (ABR) rectangle (\xmax,\ymax);
        \path[draw,color=\periodBorderColour] (ATR) -- (ABR);
        \path[draw,color=\periodBorderColour] ($(ATR)+(\plength,0)$) -- ($(ABR)+(\plength,0)$);
        \path[draw,color=\periodBorderColour] ($(ATR)+(2*\plength,0)$) -- ($(ABR)+(2*\plength,0)$);
    \end{scope}
    \draw[->] (\LOx,\hbheight) -- ($(\LOx +5mm + \alength + 2*\plength,\hbheight)$);
  
    \draw (0,0) rectangle (\LOx,\LOy);
    \node[gray] at ($(0,0)!0.5!(\LOx,\LOy)$) {$L_0$};
    
    \draw[name path=yaxis, ->] (0,0) -- ($(0,\height)$) node[black,above] {\V\ $n'$};
    \draw[name path=xaxis, ->] (0,0) -- ($(\width,0)$) node[black,right] {\R\ $n$};

\end{tikzpicture}
\caption{The initial rectangle $\is$ (blue) and two belts.
    Outside $\is$, the colouring of a belt
    consists of some exponentially bounded block (red), and another exponentially bounded non-trivial block (green)
    which repeats ad infinitum along the rest of the belt.
}\label{fig:belts}
\end{figure}
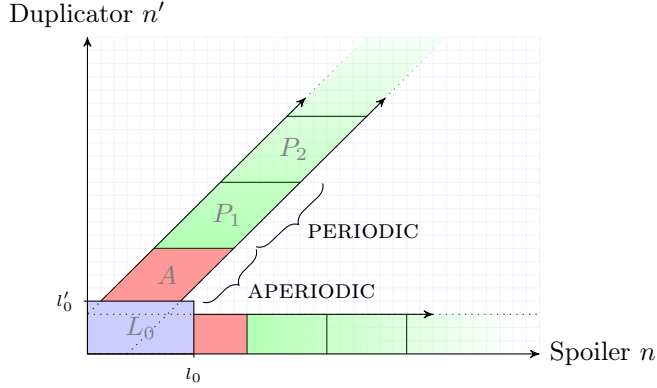

Recall that the simulation preorder on the configurations with the pair of
control-states $(q,q')$ is trivial outside of $\belt(q,q')$:
it contains all pairs $(q n, q' n')$ s.t.\ $(n,n')$ is $c$-above $\slope(q,q')$,
and contains no pairs $(q n, q' n')$ s.t.\ $(n,n')$ is $c$-below $\slope(q,q')$.
We show that inside a belt, the points corresponding to configurations in simulation
are ultimately periodic in the sense defined below.

By the definition of belts, $(n, n') \in \belt(q,q') \iff (n,n') + \slope(q,q') \in \belt(q,q')$, i.e.,
translation via the vector $\slope(q,q')$ preserves membership in $\belt(q,q')$. This
is why we restrict our focus to multiples of vectors $\slope(q,q')$.
We write $\squa{q}{q'}{j}$ for the rectangle between corners $(0,0)$ and $(l_0,l_0') +  j\cdot \slope(q,q')$.

\begin{definition}[ultimately-periodic]
For a fixed pair $(q,q') \in Q\times Q'$ and $j,k \in \N$,
a subset $R \subseteq \belt(q,q')$  is called \emph{$(j,k)$-ultimately-periodic} if
for all $(n,n') \in \N^2 \setminus \squa{q}{q'}{j}$,
\begin{align}
\begin{aligned}
(n,n') \in R \iff (n,n') + k \cdot \slope(q,q') \in R.
\end{aligned}
\end{align}
\end{definition}


\begin{remark}
Observe that for fixed $q$ and $q'$, every $(j,k)$-ultimately-periodic set $R$ can be represented by the numbers
 $j$ and $k$, and two sets
 \[
R  \ \cap \ \squa{q}{q'}{j} \qquad \text{ and } \qquad (R \setminus \squa{q}{q'}{j}) \ \cap \ \squa{q}{q'}{j+k}.
\]
\end{remark}
The following lemma states a property which is crucial for our algorithm.
It is actually a sharpening of the result of~\cite{JKM2000}, with additional effective
bounds on periods inside belts.
\begin{lemma} \label{lem:simul-periodic}
For every pair $(q, q') \in Q\times Q'$, the set
\begin{align*}
\simul_{q,q'} \ \ = \ \ \{ (n,n') \in \belt(q,q') 
: q n \simul q' n' \} 
\end{align*}
is $(j,k)$-ultimately periodic for some $j,k \in \N$ exponentially bounded w.r.t.\ the sizes of $\net$, $\net'$.
\end{lemma}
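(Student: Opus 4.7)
The plan is to encode $\simul_{q,q'}$ restricted to the belt as a sequence of ``slab types'' along the direction $\slope(q,q')$, and to extract ultimate periodicity by a pigeonhole argument supplemented by a shift-invariance property of the simulation functional outside the initial rectangle $\is$.

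Fix $(q,q')$, write $(\rho,\rho')=\slope(q,q')$, and decompose $\belt(q,q')\setminus\is$ into slabs $S_m = \bigl(\squa{q}{q'}{m+1}\setminus\squa{q}{q'}{m}\bigr)\cap \belt(q,q')$. By Theorem~\ref{thm:belt-theorem}, $\rho,\rho'\le\qq$ and the belt width is polynomial in $|\net|+|\net'|$, so each $S_m$ contains at most $w$ lattice points for some polynomial $w$; moreover, translation by $(\rho,\rho')$ carries $S_m$ bijectively onto $S_{m+1}$. Define the \emph{slab type} $\tau(m)\in\{0,1\}^w$ as the characteristic vector of $\simul_{q,q'}\cap S_m$. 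Pigeonhole on the $2^w$ possible slab types yields indices $m<m'\le 2^w$ with $\tau(m)=\tau(m')$. The lemma will follow by showing $\tau(m+t)=\tau(m'+t)$ for all $t\ge 0$, giving period $k=m'-m\le 2^w$ and preperiod $j\le 2^w$, both exponentially bounded.

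The heart of the argument is to promote the coincidence $\tau(m)=\tau(m')$ into full periodicity. To do so, I would use that $\simul_{q,q'}$ is the greatest fixpoint of the simulation functional $F$, and that restricted to configurations sufficiently far from $\is$, $F$ is shift-invariant under $(\rho,\rho')$, since all counter-nonnegativity side-conditions are vacuous there. Outside the belt, Theorem~\ref{thm:belt-theorem} fixes the verdicts on both sides, and these verdicts are themselves shift-invariant. Therefore, computing $\simul_{q,q'}$ as a Kleene limit of $F^n(\top)$, each iterate $F^n(\top)$ restricted to slabs of index $\ge m$ is determined by $F^{n-1}(\top)$ on slabs of index $\ge m-O(\qq)$ together with the Belt-Theorem boundary data. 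Since both the boundary data and the functional commute with translation by $(\rho,\rho')$ once we are far enough beyond $\is$, the iterates' restrictions to slabs $\ge m$ and $\ge m'$ agree whenever they agree on a polynomial-size initial window; the single coincidence $\tau(m)=\tau(m')$, propagated by the shift-invariant fixpoint iteration, then forces $\tau(m+t)=\tau(m'+t)$ in the limit.

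The main obstacle is making the Kleene/shift-invariance argument rigorous across the interaction between the belt and the c-above/c-below regions, and handling the fact that a single simulation step can cross up to $O(\qq)$ slabs because of the polynomial slope magnitude. I would address this by choosing $m$ large enough that all configurations reachable from $S_m$ within $O(\qq)$ steps still lie outside $\is$; this requires only polynomially many initial slabs, comfortably within the $2^w$ budget on $j$. With this precaution, shift-invariance of the fixpoint equation is literal on the relevant region, and the pigeonhole argument delivers the claimed $(j,k)$-ultimate periodicity with both $j$ and $k$ bounded by $2^w = 2^{\text{poly}(|\net|+|\net'|)}$.
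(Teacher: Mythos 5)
Your high-level plan---decompose the belt into slabs, pigeonhole on the exponentially many slab types to find a coincidence $\tau(m)=\tau(m')$, then promote it to eventual periodicity---matches the paper's structure (which pigeonholes on two-row cross-sections). The gap is in the promotion step.

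You propose to propagate the coincidence through the Kleene iterates $F^n(\top)$ of the simulation functional using shift-invariance of $F$ far from $L_0$. This cannot work as stated. The coincidence $\tau(m)=\tau(m')$ is a fact about the \emph{limit} $\simul_{q,q'}=\bigcap_n F^n(\top)$, not about any finite iterate, so it is not available as a seed for an induction on $n$. If you instead try to prove $F^n(\top)|_{\ge m}=F^n(\top)|_{\ge m'}$ (up to shift) by induction on $n$, the inductive step requires agreement of $F^{n-1}(\top)$ on slabs $\ge m-O(\qq)$: the window you must control grows backward with $n$, so the induction cannot be closed for fixed $m,m'$ over unboundedly many iterations. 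A forward induction on the slab index $t$ also fails: since $\simul$ is a greatest fixpoint, the verdict on slab $m+t+1$ is not a function of the verdict on slab $m+t$ together with boundary data---it depends on the colouring of arbitrarily far slabs.

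What the paper does instead is a one-shot construction verified by locality. From two coinciding cross-sections it builds the explicit periodic candidate relation $A\cup P^*$, checks the simulation condition pointwise by comparing relative one-step neighbourhoods (each neighbourhood in the candidate equals a neighbourhood in $\simul$), and then invokes the greatest-fixpoint property to conclude $A\cup P^*\subseteq\simul_{q,q'}$ in a single stroke, with no iterative propagation. The reverse inclusion uses a second explicit construction (drop the first period and shift the remaining ones back), again verified locally, together with a monotonicity observation. To close the gap in your argument you would need to replace the Kleene/shift-invariance step with such an explicit candidate relation and a local simulation-condition check---at which point your proof essentially becomes the paper's.
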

Thus, when searching for a simulation relation inside belts,
we may safely restrict ourselves to $(j,k)$-ultimately-periodic relations, for exponentially bounded $j$ and $k$.
%
According to the remark above, every such simulation admits the \EXPSPACE\ description that consists, 
for every pair of states $(q,q')$, of:

\begin{itemize}
\item a polynomially bounded vector $(\rho,\rho') = \slope(q,q')$;
\item a polynomially bounded relation $\init(q,q') \subseteq L_0$ inside the initial rectangle $\is$;
\item exponentially bounded natural numbers $j_{q,q'}, k_{q,q'} \in \N$; and
\item two exponentially bounded relations:
\begin{align*}
  \aper(q,q') &\ \subseteq \ \belt(q,q') \ \cap \ \squa{q}{q'}{j_{q,q'}} \\
  \per(q,q') &\ \subseteq \ (\belt(q,q')\setminus \squa{q}{q'}{j_{q,q'}}) \ \cap \  \squa{q}{q'}{j_{q,q'} + k_{q,q'}}.
\end{align*}
\end{itemize}


The above characterization leads to the following naive decision procedure,
which works in \EXPSPACE:
Guess the description of a candidate relation $R$ for the simulation relation, verify
that it is a simulation and check if it contains the input pair of configurations.

Checking whether the input pair is in the (semilinear) relation $R$ is trivial.
To verify that the relation $R$ is a simulation, one needs to check
the \emph{simulation condition} for every pair of configurations $(qn,q'n')$ in $R$,
i.e., \V\ can ensure that after playing one round of the Simulation Game,
the resulting pair of configurations is still in $R$.

The simulation condition is local in the sense that it refers only to positions with
neighbouring counter values (plus/minus $1$).
This, together with the fact that belts are disjoint outside $\is$,
implies that the complete one-neighbourhoods of points in the periodic part repeats along the belt.
It therefore suffices to examine those elements which are in the \EXPSPACE\
description to check if the simulation condition holds.

\subparagraph*{A \pspace\ procedure.}
The naive algorithm outlined above may easily be turned into a \pspace\ algorithm by
a standard shifting window trick.
Instead of guessing the complete exponential-size description upfront, we start by
guessing the polynomially bounded relation inside $\is$ and verifying it locally.
Next, the procedure stepwise guesses parts of the relations $\aper(q,q')$ and later $\per(q,q')$,
inside a polynomially bounded rectangle window through the belt
and shifts this window along the belt, checking the simulation condition for all
contained points on the way.  Since the simulation condition is local, everything
outside this window may be forgotten, save for the
first repetitive window that is used as a certificate for successfully having guessed
a consistent periodic set, once it repeats.
Because this repetition needs to occur after an exponentially bounded number of shifts,
polynomial space is sufficient to store a binary counter
that counts the number of shifts and allows to terminate unsuccessfully once the
limit is reached.
\qed


%

\section{Application to Weak Simulation Checking}\label{sec:weaksim}
A natural extension of simulation is  \emph{weak simulation}, that abstracts from internal steps.
\begin{definition}
For a \lts\ over actions $\Act \cup \{\tau\}$ define
\emph{weak} step relations by
$\wstep{\tau} = \Step{\tau}{*}{}$ and
$\wstep{a} = \Step{\tau}{*}{}\Step{a}{}{}\Step{\tau}{*}{}$ for $a \neq \tau$.
Weak simulation ($\WSIM{}{}$) is now defined just like $\SIM{}{}$, using
Simulation Games, in which \V\ moves along weak steps.
\end{definition}

For systems without $\tau$-labelled transitions, $\step{a} = \wstep{a}$ and therefore strong and weak simulation coincide.
The \pspace\ lower bound from \cite{Srb2009} for checking strong
simulation thus also holds for weak simulation checking over \OCN.

Weak simulation has recently been shown to be decidable for \OCN~\cite{HMT:LICS2013}.
The main obstacle was that \V's system is infinitely branching
w.r.t.\ the weak $\wstep{a}$ steps, 
which implies that non-simulation does not
necessarily manifest itself locally.

In \cite{HMT:LICS2013}, this problem is resolved by
constructing a monotone decreasing sequence of semilinear \emph{approximant relations}
that converges to weak simulation at a finite index.
The approximant relations are derived from a symbolic 
characterization of \V's infinitely-branching system.
They can be computed inductively
by characterizing them in terms of strong simulation over suitably modified \OCN.
The fact that one can effectively compute semilinear descriptions
of $\SIM{}{}$ over \OCN\ \cite{JKM2000} allows to successively compute
the approximant relations and to detect convergence of the sequence.

Here we show that the polynomial bounds from Theorem~\ref{thm:belt-theorem},
together with the technique from \cite{HMT:LICS2013}, imply a \pspace\
upper bound even for checking \emph{weak} simulation on \OCN.
In particular, we claim that the sizes of the ``suitably modified \OCN''
mentioned above, which characterize the approximants, 
are in fact polynomial for every index $i\in\N$ in the sequence.
A more detailed analysis can be found in Appendix~\ref{sec:app_weaksim}.

\begin{theorem}\label{thm:weakpspace}
Checking weak simulation preorder on \OCN\ is \pspace-complete.
\end{theorem}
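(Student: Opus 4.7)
The PSPACE lower bound is inherited directly from Srb2009, because over OCN without $\tau$-labelled transitions the weak and strong step relations coincide, so the existing hardness reduction for strong simulation already targets weak simulation. The whole work therefore lies in establishing the PSPACE upper bound, and the plan is to revisit the decidability proof of \cite{HMT:LICS2013} and argue that every one of its ingredients can be carried out in polynomial space once Theorem~\ref{thm:strongsim-pspace} and the sharper belt bound from Section~\ref{sec:sharper:estimation} are available.

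Recall that \cite{HMT:LICS2013} produces a monotone decreasing chain $R_0 \supseteq R_1 \supseteq \dots$ of semilinear over-approximants of $\WSIM{}{}$ that stabilises at some finite index, and it does so by characterising each $R_{i+1}$ as \emph{strong} simulation preorder on a pair of modified \OCN\ $\net_i, \net'_i$ that encode Duplicator's best weak responses with respect to $R_i$. My plan is to prove the following two quantitative claims: (a) the sizes of $\net_i$ and $\net'_i$ are polynomial in $|\net|+|\net'|$, uniformly in $i$; and (b) the stabilisation index is at most exponential. Granted (a) and (b), the procedure is then: iterate $i = 0, 1, 2, \dots$, at each step use Theorem~\ref{thm:strongsim-pspace} on $\net_i, \net'_i$ to decide membership of the input pair in $R_{i+1}$, and terminate once $R_{i+1} = R_i$. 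Since each single strong-simulation call runs in polynomial space, and an iteration counter up to the exponential stabilisation bound needs only polynomial space, the whole procedure is in PSPACE.

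The main obstacle is claim (a). The naive reading of the construction in \cite{HMT:LICS2013} lets the modified nets grow with the description of $R_i$, which a priori could be exponential. Here is where the sharper belt estimate of Section~\ref{sec:sharper:estimation} is crucial: at every index $i$ the relation $R_i$ is captured by belts whose slopes and widths are bounded by a polynomial in the parameters $\text{\sc scc}$ and $\text{\sc acyc}$ of the product control graph that is used to characterise $R_i$. I would inspect the construction of $\net_i, \net'_i$ transition by transition and show that the extra control-states and counter offsets needed to symbolically encode Duplicator's $\wstep{a}$-moves with respect to $R_i$ depend only on these polynomially bounded belt parameters, not on the full semilinear description. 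In particular, the counter changes used to witness the best $\tau^* a \tau^*$ response are expressible using a polynomial number of unit-step transitions, so $|\net_i| + |\net'_i|$ stays polynomial in $|\net|+|\net'|$.

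Claim (b) and the final membership test are comparatively routine. Since every $R_i$ is a belt-structured semilinear set with polynomial belt slopes and polynomially bounded initial rectangle, and its ultimately periodic part has exponentially bounded period by Lemma~\ref{lem:simul-periodic} applied to $\net_{i-1}, \net'_{i-1}$, there are only doubly exponentially many candidate relations; a tighter accounting restricted to those actually produced by the fixpoint iteration yields an exponential stabilisation bound. The shifting-window idea from Section~\ref{sec:pspace} applies verbatim, so the test $qn \WSIM{}{} q'n'$ can be carried out in polynomial space by streaming through the relevant description of $R_i$ rather than materialising it. Combining all of the above yields Theorem~\ref{thm:weakpspace}.
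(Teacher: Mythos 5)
Your overall plan follows the paper's structure closely: reduce to a strong-simulation characterisation of the approximant chain from \cite{HMT:LICS2013}, then argue that the modified nets stay polynomial in size and invoke Theorem~\ref{thm:strongsim-pspace} at each iteration. The lower bound observation is also exactly the paper's.

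However, there is a gap in your treatment of the key claim (a). You argue that the gadget sizes in $\snet_i,\snet_i'$ are polynomial because the belt widths at each level are bounded polynomially by the sharper estimate of Section~\ref{sec:sharper:estimation}. The subtle danger you do not address is that this bound is a polynomial in the size of the \emph{current} nets $\snet_i,\snet_i'$, which already contain the gadgets built from the belt widths at level $i{-}1$. A naive application of the sharper estimate gives a recurrence of the form $W_{i+1}\le \poly(W_i)$, which may compound to exponential growth over the iteration. The paper closes this gap with two observations that you do not make explicit: first, the gadgets introduce no transitions back into $\mnet\times\mnet'$ and Duplicator's net $\snet_i'$ is identical for all $i\ge 1$, so the parameter $\scc$ (maximal SCC size) of the product is \emph{constant} across iterations, and only $\acyc$ can grow; this turns the recurrence into the \emph{additive} form $W_{i+1}\le \poly(|\mnet\times\mnet'|)+W_i$. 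Second, the sufficient values $\suff{q,q',i}$ are monotone decreasing in $i$ and can only drop from $\omega$ to a finite value once per state pair, so among all indices $i$ there are at most $K=|Q\times Q'|$ steps where $W_{i+1}\ge W_i$; combined with the additive recurrence this yields $W_i\le K\cdot(\poly(|\mnet\times\mnet'|)+1)$ uniformly, which is what actually makes claim (a) go through. Without this monotonicity/constant-$\scc$ argument, ``inspect the construction transition by transition'' does not by itself rule out compounding. Your claim (b) is not needed in the paper's form (stabilisation is detected by the monotone and bounded $\suff$-values rather than by counting candidate relations), but an exponential bound is in any case harmless for \pspace.
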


\section{Conclusion}\label{sec:conclusion}
We have shown that both strong and weak simulation preorder checking between
two given \OCN\ processes is \pspace-complete. Moreover, it is possible to
compute representations of the entire simulation relations as semilinear sets,
but these require exponential space. One cannot expect polynomial-size 
representations of the relations as semilinear sets, because otherwise one
could first guess the representation and then verify in ${\it coNP}^{\it NP}$ 
(for strong simulation) that there are
no counterexamples to the local simulation condition. This would yield an algorithm
in $\Sigma_p^3$ in the polynomial hierarchy, which (under standard assumptions
in complexity theory) contradicts the \pspace-hardness of the problem.



\begin{thebibliography}{10}

\bibitem{AC1998}
P.A. Abdulla and K.~Cerans.
\newblock Simulation is decidable for one-counter nets (extended abstract).
\newblock In {\em CONCUR}, volume 1466 of {\em LNCS}, pages 253--268, 1998.

\bibitem{BGJ2010}
S.~B{\"o}hm, S.~G{\"o}ller, and P.~Jan\v{c}ar.
\newblock Bisimilarity of one-counter processes is {PSPACE}-complete.
\newblock In {\em CONCUR}, volume 6269 of {\em LNCS}, 2010.

\bibitem{Gla2001}
R.J.~van Glabbeek.
\newblock The linear time -- branching time spectrum {I}; the semantics of
  concrete, sequential processes.
\newblock In J.A. Bergstra, A.~Ponse, and S.A. Smolka, editors, {\em Handbook
  of Process Algebra}, chapter~1, pages 3--99. Elsevier, 2001.

\bibitem{HMT:LICS2013}
P.~Hofman, R.~Mayr, and P.~Totzke.
\newblock Decidability of weak simulation on one-counter nets.
\newblock In {\em Proc. of LICS 2013}. IEEE, 2013.

\bibitem{JKM2000}
P.~Jan\v{c}ar, A.~Ku\v{c}era, and F.~Moller.
\newblock Simulation and bisimulation over one-counter processes.
\newblock In {\em Proceedings of the 17th Annual Symposium on Theoretical
  Aspects of Computer Science}, volume 1770 of {\em LNCS}, pages 334--345,
  London, UK, 2000. Springer-Verlag.

\bibitem{JM1999}
P.~Jan\v{c}ar and F.~Moller.
\newblock Simulation of one-counter nets via colouring.
\newblock Technical report, Uppsala Computing Science Research Report 159,
  February 1999.

\bibitem{JMS1999}
P.~Jan\v{c}ar, F.~Moller, and Z.~Sawa.
\newblock Simulation problems for one-counter machines.
\newblock In {\em SOFSEM}, volume 1725 of {\em LNCS}, pages 404--413, 1999.

\bibitem{KJ2006}
A.~Ku\v{c}era and P.~Jan\v{c}ar.
\newblock Equivalence-checking on infinite-state systems: Techniques and
  results.
\newblock {\em TPLP}, 6(3):227--264, 2006.

\bibitem{May2003}
R.~Mayr.
\newblock Undecidability of weak bisimulation equivalence for 1-counter
  processes.
\newblock In {\em ICALP}, volume 2719 of {\em LNCS}, pages 570--583, 2003.

\bibitem{Srb2009}
J.~Srba.
\newblock Beyond language equivalence on visibly pushdown automata.
\newblock {\em Logical Methods in Computer Science}, 5(1):1--22, 2009.

\bibitem{Valiant1973}
L.G. Valiant.
\newblock {\em Decision procedures for families of deterministic pushdown
  automata}.
\newblock PhD thesis, Department of Computer Science, University of Warwick,
  Coventry, July 1973.

\end{thebibliography}


\appendix
\newpage
\section{Missing Proofs from Sections \ref{sec:beltproof} and \ref{sec:pspace}}
\label{sec:app_proofs}

\subsection{Proof of Lemma~\ref{lem:key-lemma-R}}  \label{sec:proof-R}

Suppose \R\ wins the Slope Game from $((q,q'), (\rho, \rho'))$ using a strategy of segment depth $d$.
A position in the Slope Game contains a positive vector $(\rho,\rho')$,
while a position in the Simulation Game contains a pair $(n,n') \in \N\x\N$ 
of counter values, that can also be interpreted as a positive vector.
We will derive a strategy for \R\ in the Simulation Game that is winning from all positions
$(qn,q'n')$ where $(n,n')$ is $(\qq \cdot d)$-below $(\rho,\rho')$.
%
%
%
The crucial idea of the proof is to consider the segments of the supposed winning strategy in the
Slope Game separately.
Each such segment is a strategy for one phase and as such, describes how to move in
the Simulation Game until the next lasso is observed. Afterwards, \R\ can
chose to continue playing according to the next lower segment, or ``roll back'' the
cycle and continue playing according to the current segment.
By the rules of the Slope Game we observe that after sufficiently many such rollbacks
the difference between the ratio $n/n'$ of the actual counters and the slope of the next
lower segment is negligible, i.e., these vectors are equivalent in the sense
of Definition~\ref{def:vector-equivalence} in Section~\ref{ssec:belt-theorem-proof}.
Then, \R\ can safely continue to play according to the next lower segment at level
$d-1$.

To safely play such a strategy in the Simulation Game, \R\ needs to ensure that her own
counter does not decrease too much as that could restrict her ability to move.
We observe however, that any partial play that ``stays in some segment'' at height $d$,
can be decomposed into a single acyclic prefix plus a number of cycles.
Such a play therefore preserves the invariant that all visited points are
$\qq\cdot (d-1)$-below the slope of the phase. In particular, this means that \R's
counter is always $\ge\qq\cdot(d-1)$.

Formally, the proof of Lemma~\ref{lem:key-lemma-R} proceeds by induction on the segment depth $d$.

\subparagraph*{Case $d = 1$.} 
This means that \R\ has a strategy to win the Slope Game in the first phase,
and hence to enforce that the effect of all cycles is behind $(\rho,\rho')$ but not
positive. Denote this strategy by $T$.
In the Simulation Game, \R\ will re-use this strategy as we describe below.
At every position $(qn,q'n')$ in the Simulation Game, 
\R\ keeps a record of the \emph{corresponding position} $(\pi, (\rho,\rho'))$ in the Slope Game, enforcing the
invariant that $(q,q')$ are the ending states of the path $\pi$.

From the initial position  $(qn,q'n')$ with corresponding position $((q,q'), (\rho,\rho'))$,
\R\ starts playing the Simulation Game according to $T$,
until the path in the corresponding position of the Slope Game, say $\pi_1$,
describes a lasso (this must happen after at most $\qq$ rounds).
Thus $\pi_1$ splits into:
\begin{equation}
  \pi_1 = \widetilde\pi_1 \, \bar\pi_1
\end{equation}
where the suffix $\bar\pi_1$ is a  cycle. Denote by $(\widetilde\alpha_1, \widetilde\alpha'_1)$
and $(\bar\alpha_1, \bar\alpha'_1)$ the effects of $\widetilde\pi_1$ and
$\bar\pi_1$, respectively. 
The current values of counters are clearly
\begin{equation}
 n + \widetilde\alpha_1 + \bar\alpha_1 \qquad \text{and }\quad
 n' + \widetilde\alpha'_1 + \bar\alpha'_1
\end{equation}
assuming that the play did not end by now with \R's win.
As the length of path $\pi_1$ is at most $\qq$ and
$(n,n')$ is assumed to be $\qq$-below $(\rho,\rho')$, we know that
all positions visited by now in the Simulation Game were below $(\rho,\rho')$.
In particular, \R's counter value was surely non-negative by now.

Now \R\ \emph{rolls back} the cycle $\bar\pi_1$, namely changes the corresponding
position in the Slope Game from $(\pi_1, (\rho,\rho'))$ to $(\widetilde\pi_1,(\rho,\rho'))$ 
and continues playing according to $T$. 
The play continues until \R\ wins or the path in the corresponding position of the
Slope Game, say $\pi_2$, is a lasso again. Again, we split the path into an acyclic
prefix and a cycle:
\begin{equation}
  \pi_2 = \widetilde\pi_2 \, \bar\pi_2.
\end{equation}
Denote the respective effects by $(\widetilde\alpha_2, \widetilde\alpha'_2)$
and $(\bar\alpha_2, \bar\alpha'_2)$.
A crucial but simple observation is that, assuming that the play did not end by now with \R's win,
the current values of counters are now
\begin{equation}
    n + \widetilde\alpha_2 + \bar\alpha_1 + \bar\alpha_2 \qquad \text{and }\quad
    n' + \widetilde\alpha'_2 + \bar\alpha'_1 + \bar\alpha'_2,
\end{equation}
i.e. the effect $(\widetilde\alpha_1, \widetilde\alpha'_1)$ of $\widetilde\pi_1$ does not contribute any more.
As $(\bar\alpha_1, \bar\alpha'_1)$ is behind $(\rho,\rho')$ we may
apply Lemma~\ref{lem:preserve-above} to $(\bar\alpha_1, \bar\alpha'_1)$ and $c = 0$
in order to deduce, similarly as before, that all positions by now were below $(\rho,\rho')$.
Now \R\ rolls back $\bar\pi_2$ by establishing $(\widetilde\pi_2, (\rho,\rho'))$ as
the new corresponding position in the Slope Game.
Continuing in this way, after $k$ rollbacks the counter values are:
\begin{align}
\begin{aligned}
 &n\ + \widetilde\alpha_k + (\bar\alpha_1 + \bar\alpha_2 + \ldots \ + \bar\alpha_{k-1}) + \bar\alpha_k
 \qquad \text{and}\\
 &n' + \widetilde\alpha'_k + (\bar\alpha'_1 + \bar\alpha'_2 + \ldots + \bar\alpha'_{k-1}) + \bar\alpha'_k,
 \end{aligned}
\end{align}
assuming that \R\ did not win earlier. 
All the vectors $(\bar\alpha_i, \bar\alpha'_i)$, and thus also the sum
\begin{equation}\label{eq:vectorsum}
  (\bar\alpha_1 + \bar\alpha_2 + \ldots \ + \bar\alpha_{k-1}, \bar\alpha'_1 + \bar\alpha'_2 + \ldots + \bar\alpha'_{k-1})
\end{equation}
are behind $(\rho,\rho')$, hence similarly as before all positions by now have been below $(\rho,\rho')$,
by Lemma~\ref{lem:preserve-above} applied to the vector~\eqref{eq:vectorsum} above.

This in particular means that \R's counter remains above value $c$.
However, as by assumption all observed cycles come from a final segment in her Slope Game strategy,
the vector~\eqref{eq:vectorsum} cannot be positive for any $k$. Thus, every rollback strictly
decreases \V's counter value.
We conclude that after sufficiently many rollbacks, \V's counter must eventually drop below $0$ and
hence \R\ eventually wins.
\subparagraph*{Case $d > 1$.} 
By assumption, \R\ has a strategy $T$ for the Slope Game, which has segment depth $d>0$.
As before, \R's strategy in the Simulation Game will re-use the strategy $T$
from the Slope Game, using rollbacks.

\R\ plays according to the initial segment of this strategy, that allows her to win or at least
guarantee that the effect 
of the first observed lasso's circle is less steep than $(\rho,\rho')$.
After $l$ rollbacks, the counter values will be of the form:
\begin{align} \label{eq:countervalues}
\begin{aligned}
  &n + \widetilde\alpha + (\bar\alpha_1 + \ldots \ + \bar\alpha_{m})  + (\bar\gamma_1 +\ldots \ + \bar\gamma_{l})\quad\text{and}\\
  &n' + \widetilde\alpha' + (\bar\alpha'_1 + \ldots + \bar\alpha'_{m}) + (\bar\gamma'_1 + \ldots + \bar\gamma'_{l}),
 \end{aligned}
\end{align}
where the absolute values of $\widetilde\alpha$ and $\widetilde\alpha'$ are at most $\qq$,
the vectors $(\bar\gamma_i,\bar\gamma'_i)$ are behind $(\rho,\rho')$ and positive,
and the vectors $(\bar\alpha_i,\bar\alpha'_i)$ are behind $(\rho,\rho')$ and non-positive.
We apply Lemma~\ref{lem:preserve-above} to $c = \qq\cdot(d-1)$ and learn
that all the positions by now have been $(\qq\cdot(d-1))$-below $(\rho,\rho')$.

In general \R\ has no power to choose whether a effect of a cycle at a next rollback is positive or not.
However, if from some point on all effects are non-positive then \V's counter
eventually drops below $0$ and \R\ wins.
Thus w.l.o.g\., we focus on positions in the Simulation Game immediately after a rollback of a
cycle with positive effect.
Using the notation from~\eqref{eq:countervalues}, suppose $(\gamma_l,\gamma'_l)$ 
is the effect of the last rolled back cycle.
We need the following claim in order to apply the induction assumption:
\begin{claim}
  After sufficiently many rollbacks the vector $(\bar n, \bar n')$ of current counter values
  in the Simulation Game is $(\qq\cdot(d-1))$-below some vector $(\gamma,\gamma')$
  which is equivalent to the positive effect $(\gamma_l,\gamma'_l)$ of the last rolled back cycle.
\end{claim}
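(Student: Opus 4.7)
\subparagraph*{Proof plan.}
The plan is to combine a pigeonhole argument on positive cycle effects with a geometric decomposition of the counter displacement, in order to identify a suitable $(\gamma,\gamma')$ on which to invoke the induction hypothesis at segment depth $d-1$.

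First, I would observe that every positive behind-$(\rho,\rho')$ cycle effect corresponds to the effect of a simple cycle in the product control graph, so the number of distinct such effects is at most $\qq$. By the pigeonhole principle, any sufficiently long sequence of rollback phases must contain arbitrarily many occurrences of some specific positive effect $(\gamma^{*},(\gamma^{*})')$. \R's strategy can be arranged to stop rolling back immediately after one such occurrence, so that $(\gamma_l,\gamma_l')$ coincides with $(\gamma^{*},(\gamma^{*})')$. Since $(\gamma_l,\gamma_l')\in V$, its equivalence class under Definition~\ref{def:vector-equivalence} is the ray $\mathbb{R}^{+}\cdot(\gamma_l,\gamma_l')$; the target vector $(\gamma,\gamma')$ required by the claim may be chosen from this ray, and since the ``$c$-below'' relation depends only on direction, any representative will do.

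Second, I would use Equation~\eqref{eq:countervalues} to decompose $(\bar n,\bar n')-(n,n')$ into (i) the prefix $(\widetilde\alpha,\widetilde\alpha')$ of absolute value at most $\qq$, (ii) the non-positive behind-$(\rho,\rho')$ contributions, which by Lemma~\ref{lem:preserve-above} preserve being below $(\rho,\rho')$, and (iii) the positive behind-$(\rho,\rho')$ contributions. Within block (iii), I separate the many occurrences of the dominant effect $(\gamma_l,\gamma_l')$ from the residual auxiliary effects. Adding multiples of $(\gamma_l,\gamma_l')$ moves the position parallel to that direction and hence leaves the signed distance to the ray unchanged, while the auxiliary contributions are polynomially bounded since, by the pigeonhole argument, each non-dominant positive effect can occur only a bounded number of times before being superseded. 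Combined with the hypothesis that $(n,n')$ is $\qq d$-below $(\rho,\rho')$, this should yield the cross-product inequality $\gamma_l'\,\bar n - \gamma_l\,\bar n' > \qq(d-1)(\gamma_l+\gamma_l')$, which is precisely the statement that $(\bar n,\bar n')$ is $\qq(d-1)$-below $(\gamma,\gamma')$.

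The main obstacle I anticipate is controlling positive cycle effects that are strictly steeper than $(\gamma_l,\gamma_l')$ but still behind $(\rho,\rho')$: unlike the non-positive contributions, these are not handled by a direct application of Lemma~\ref{lem:preserve-above} to $(\gamma_l,\gamma_l')$, and they tend to decrease the signed cross product with $(\gamma_l,\gamma_l')$. The resolution is that, by the pigeonhole argument, each such non-dominant positive effect can appear only polynomially often before $(\gamma_l,\gamma_l')$ emerges as the dominant one, giving a polynomially bounded aggregate that is absorbed by the $\qq$-slack between ``$\qq d$-below $(\rho,\rho')$'' and ``$\qq(d-1)$-below $(\gamma,\gamma')$''.
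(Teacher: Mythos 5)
Your proof takes a genuinely different route from the paper's and, unfortunately, the route has a gap at exactly the step you flag as the main obstacle. The paper's argument for this claim is a \emph{termination} argument: one observes that if the current counter point fails to be $\qq(d-1)$-below the direction of the last rolled-back cycle, then (up to the fixed slack of $\qq$ from the acyclic prefix) the cycle rolled back \emph{next} must be strictly steeper than the current one, so keeping the condition false forever would force \V\ to produce cycles of strictly increasing steepness without bound, which is impossible because there are only finitely many simple cycles and hence finitely many cycle-effect directions. The non-positive behind-$(\rho,\rho')$ effects are discharged first, as in your plan, via Lemma~\ref{lem:preserve-above}; but the positive steeper effects are handled by this monotonicity-of-steepness argument, not by a counting bound.

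Your resolution of the obstacle is where things break. You invoke pigeonhole to say ``each such non-dominant positive effect can appear only polynomially often before $(\gamma_l,\gamma_l')$ emerges as the dominant one,'' but that is not what pigeonhole gives you. Pigeonhole tells you that \emph{some} positive effect recurs infinitely often in an infinite run of rollbacks; it does \emph{not} bound how often \emph{other} (steeper) positive effects appear before or interleaved with that one. In particular, a cycle effect strictly steeper than $(\gamma_l,\gamma_l')$ may itself recur infinitely often, and every such occurrence decreases the signed cross product $\bar n\,\gamma_l' - \bar n'\,\gamma_l$ that you need to make large. So the ``polynomially bounded aggregate'' is unsupported, and the argument does not close. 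Relatedly, your step ``\R\ can be arranged to stop rolling back immediately after one occurrence of $(\gamma^*,(\gamma^*)')$'' has no force: the first time the eventually-frequent effect appears carries no control over the accumulated cross product. Finally, the phrase ``the $\qq$-slack between `$\qq d$-below $(\rho,\rho')$' and `$\qq(d-1)$-below $(\gamma,\gamma')$''' conflates two conditions taken with respect to \emph{different} directions; there is no single $\qq$-sized buffer between them, because the transfer from one direction to the other is precisely what makes the claim nontrivial and is governed by the geometry of the steepness ordering, not by a constant slack. Your observations that the equivalence class of a vector in $V$ is just its positive ray, and that the $(\bar\alpha_i,\bar\alpha'_i)$ contributions can be discarded via Lemma~\ref{lem:preserve-above}, are both correct and consistent with the paper; the missing piece is the steepness-increase termination argument in place of the quantitative pigeonhole bound.
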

\begin{proof}
  By an easy geometric argument. Ignore vectors $(\alpha_i,\alpha'_i)$ as they
  preserve being $(\qq\cdot(d-1))$-below all positive vectors that are less steep
  than $(\rho,\rho')$.  If \V\ wants to falsify the condition, he would need to
  increase the steepness of the rolled back cycle infinitely often, which is
  clearly impossible as there are only finitely many simple cycles.
\end{proof}

Let $(\bar q \bar n, \bar q' \bar n')$ be a position of the Simulation Game satisfying the claim.
We know that \R\ has a winning strategy in the Slope Game from $((\bar q, \bar q'),
(\gamma_l,\gamma'_l))$, of segment depth at most $d-1$.
Because $(\gamma_l,\gamma'_l)$ is equivalent to $(\gamma,\gamma')$, we can apply
Lemma~\ref{lem:constant-winner} and know that the same strategy is winning in the
Slope Game from $((\bar q, \bar q'), (\gamma, \gamma'))$.
By the induction assumption we conclude that \R\ wins the Simulation Game
from $(\bar q \bar n, \bar q' \bar n')$, which completes the proof of
Lemma~\ref{lem:key-lemma-R}.\qed

\subsection{Proof of Lemma~\ref{lem:key-lemma-V}}  \label{sec:proof-V}

Suppose \V\ wins the Slope Game from $((q,q'), (\rho, \rho'))$ using a strategy of segment depth $d$.
We will show that \V\ wins the Simulation Game from every position $(q n, q' n')$ where
$(n,n')$ is $(\qq \cdot d)$-above $(\rho,\rho')$.
We will again build on the concept of rollbacks and proceed by induction on $d$.

\subparagraph*{Case $d = 1$.} 
In this case, \V\ has a strategy to win the Slope Game immediately after the first phase.
This means he can enforce that the effects of the cycles of all observed lassos are not behind
$(\rho,\rho')$.
By a straightforward induction using part 2 of Lemma~\ref{lem:preserve-above}
one can show that \V\ can preserve the invariant that all visited points
are $K$-above $(\rho,\rho')$. This in particular means that
his counter value stays positive and he wins by enforcing an infinite play.

\subparagraph*{Case $d > 1$.} 
Let $T$ denote the initial segment of \V's strategy in the Slope Game.
Every effect of a cycle in $T$ is either not behind $(\rho,\rho')$ or behind
$(\rho,\rho')$, but positive.

In the Simulation Game, \V\ will play according to this initial segment $T$, using
rollbacks, as long as the effect of the rolled back cycle is not behind
$(\rho,\rho')$.
Just as in the previous case, we can apply part 2 of Lemma~\ref{lem:preserve-above}
for $c = \qq\cdot d$ and derive that in this way, \V\ is able to keep the current
counter values $(\qq\cdot d)$-above $(\rho,\rho')$.
 
Suppose that after a few iterations, the effect $(\alpha,\alpha')$ of 
the last cycle \emph{is} behind $(\rho,\rho')$ and let $(\bar q \bar n, \bar q' \bar n')$ be the
position in the Simulation Game directly afterwards.
In this case, $(\alpha, \alpha')$ is clearly positive and less steep than
$(\rho,\rho')$.
Now the point described by the counter values before this last cycle was $(\qq\cdot
d)$-above $(\rho,\rho')$ and because the cycle is no longer than $K$, we know that the
point $(\bar n, \bar n')$ of current counter values (after the cycle) is still
$(\qq\cdot(d-1))$-above $(\rho,\rho')$.
This means, as $(\alpha, \alpha')\lesssteep(\rho,\rho')$,
that $(\bar n, \bar n')$ is also $(\qq\cdot(d-1))$-above $(\alpha,\alpha')$.

Knowing that \V\ has a winning strategy in the Slope Game from $((\bar q, \bar q'),
(\alpha,\alpha'))$ of segment depth at most $d-1$, by induction assumption we obtain 
a winning strategy for \V\ in the Simulation Game from $(\bar q \bar n, \bar q' \bar n')$. 
This completes the description of \V's winning strategy from $(q n, q' n')$ and hence also
the proof of Lemma~\ref{lem:key-lemma-V}.\qed

\subsection{Proof of Lemma~\ref{lem:simul-periodic}}\label{proof:lemma:x}
%
%
For technical convenience we assume w.l.o.g.~that no belt contains the upper right corner of $\is$
(this can always be achieved by minimally extending $\is$, if necessary.)
Thus every belt intersects either the horizontal, or the vertical border of $\is$, but not both.

Recall that the non-parallel belts do not overlap/interfere with each other outside $\is$, hence 
we can consider them separately. 
For the rest of the proof fix states $q, q'$ and let $(\rho, \rho') = \slope(q,q')$.
W.l.o.g.~suppose that $\belt(q,q')$ intersects the horizontal border of $\is$ 
(if it intersects the vertical border of $\is$ the proof is analogous).

For simplicity we assume that no other belt is parallel to $\belt(q,q')$.
The proof below may be easily adapted to the general case by considering a bunch of parallel belts
jointly, instead of just the single one $\belt(q,q')$.

By a \emph{cross-section} at level $n'$ we mean the intersection of $\simul_{q,q'}$
with two consecutive horizontal lines at that level,
i.e.~with $\{ (n,n'), (n,n'+1) : n \in \N \}$.
We may assume that cross-sections are always non-empty (this can always be ensured by
slightly widening $\belt(q,q')$ if necessary).
We say that two cross-sections $s_1$ and $s_2$ are \emph{equal} if
one of them is obtained by a shift of the other by a multiple of $(\rho, \rho')$; formally, we require 
for some $k \in \N$,
\begin{align} \label{eq:equalcs}
s_1 + k \cdot (\rho, \rho') \  \ = \ \ s_2.
\end{align}
%
Choose two cross-sections $s_1, s_2$ at levels $n'_1$ and $n'_2$ respectively,
and $k > 0$ that satisfies~\eqref{eq:equalcs}.
Let $P$ be the restriction of $\simul_{q,q'}$ to the area between $s_1$ and $s_2$, 
and $A$ be the restriction of $\simul_{q,q'}$ to the area below $s_1$:
\begin{align*}
A \ & = \  \{ (n,n') \in \  \simul_{q,q'} \ : \ n' < n'_1 \} &
P \ & = \ \{ (n,n') \in \ \simul_{q,q'} \ : \ n'_1 \leq n' < n'_2 \}.
\end{align*}
Recall that $A$ and $P$, similarly as $\simul_{q,q'}$, are subsets of $\belt(q,q')$. 
We claim: 
\begin{lemma} \label{lem:periodicbelt}
For every $s_1, s_2$ and $k > 0$ satisfying~\eqref{eq:equalcs},
\[
\simul_{q,q'} \ = \ A \ \cup \ P^*, \quad \text{ where } P^* \ = \ \bigcup_{i \in \N} (P + i \cdot k \cdot (\rho,\rho')).
\]
\end{lemma}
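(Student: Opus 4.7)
The plan is to prove the two inclusions $\simul_{q,q'} \supseteq A \cup P^*$ and $\simul_{q,q'} \subseteq A \cup P^*$ separately, using pumping arguments that exploit the hypothesis $s_1 + k(\rho,\rho') = s_2$ together with the locality of the simulation condition, which couples only counter values that differ by at most one.

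For the inclusion $A \cup P^* \subseteq \simul_{q,q'}$, since $A \cup P \subseteq \simul_{q,q'}$ holds by definition, it suffices to establish $P + ik(\rho,\rho') \subseteq \simul_{q,q'}$ for every $i \ge 1$. I would introduce an auxiliary relation $\hat R$ on all configurations by keeping $\simul$ unchanged on every slice other than $(q,q')$ and, on the $(q,q')$-slice, adjoining the translates $\bigcup_{i \ge 1}(P + ik(\rho,\rho'))$. The goal is then to verify that $\hat R$ is a simulation; by maximality of $\simul$ this yields $\hat R \subseteq \simul$ and hence the claim. To check the simulation condition at a newly added point $(q n, q' n') \in P + ik(\rho,\rho')$, I would translate the $\simul$-response available at the original point $(n_0, n'_0) := (n - ik\rho, n' - ik\rho') \in P$ by $ik(\rho,\rho')$. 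Since both counters are safely above the zero-axis for $n,n'$ inside the belt, the translated transitions are valid in both OCNs. Targets that remain in the $(q,q')$-slice either land inside some $P + jk(\rho,\rho')$ or on a row of the form $s_2 + jk(\rho,\rho') = s_1 + (j{+}1)k(\rho,\rho')$; by the cross-section equality this row belongs to $\hat R$ as well.

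For the reverse inclusion $\simul_{q,q'} \subseteq A \cup P^*$, I would argue dually: if $(n,n') \in \simul_{q,q'}$ were above $s_1$ but outside $P^*$, remove from the $(q,q')$-slice of $\simul$ all such excess points and verify that the resulting smaller relation is still a simulation; this contradicts maximality of $\simul$. The same locality and cross-section equality are used to check the simulation condition at points along the new boundary.

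The main obstacle I expect concerns responses that \emph{leave} the $(q,q')$-slice: a translated response lands in some other slice $(q_1, q'_1)$ and, for $\hat R$ to be a simulation, that shifted target must already belong to $\simul_{q_1, q'_1}$. The cross-section hypothesis, however, only constrains the $(q,q')$-slice. Resolving this requires either strengthening $k$ to a common period of \emph{all} slices, which exists by the Belt Theorem because all slopes and belt widths are polynomially bounded and hence admit a common multiple, or jointly constructing $\hat R$ as a simultaneous extension of all slices; the appendix version of the argument most likely does the former, which is legitimate for the purposes of Lemma~\ref{lem:simul-periodic} because we only need \emph{some} ultimately-periodic description of $\simul_{q,q'}$ with exponentially bounded parameters. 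Once this coupling across slices is handled, the remaining verification reduces to a routine case analysis that exploits locality of the simulation condition and the cross-section equality as the matching boundary condition propagating the relation periodically upward along the belt.
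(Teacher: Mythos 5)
Your high-level plan is right — prove the two inclusions separately by constructing auxiliary relations and checking the simulation condition via locality — and it is essentially what the paper does for the first inclusion. But there are two genuine problems.

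First, your argument for the reverse inclusion $\simul_{q,q'} \subseteq A \cup P^*$ does not work as stated. You propose to delete the alleged excess points from $\simul$ and then argue that, because the resulting smaller relation is still a simulation, maximality of $\simul$ is contradicted. It is not: maximality only says $\simul$ is the largest simulation, and a strict sub-relation being a simulation is entirely compatible with that. The paper instead considers, for the arithmetic progression of equal cross-sections, the segments $P_1, P_2, \ldots$ of $\simul_{q,q'}$, notes $P_i \subseteq P_{i+1} - k(\rho,\rho')$ from the first inclusion, and then proves the reverse containments $P_{i+1} - k(\rho,\rho') \subseteq P_i$ by constructing a \emph{down-shifted} relation (roughly, drop $P_1$ and shift every other $P_i$ by $-k(\rho,\rho')$), showing \emph{that} is a simulation, and invoking maximality in the correct direction, namely that this new simulation must be contained in $\simul$.

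Second, you correctly spot the cross-slice issue, but your proposed fix — replacing $k$ by a common period of all slices — is not available and would not even be well-defined, since different belts generally have different slope directions and hence admit no common translation vector (let alone that the lemma is stated for \emph{every} $s_1,s_2,k$ satisfying the cross-section equality, so you are not free to choose $k$). The paper resolves this differently, using the geometry of the belts rather than arithmetic: outside $L_0$, non-parallel belts are pairwise disjoint, so the entire portion of $\belt(q,q')$ above $L_0$, together with a one-neighbourhood, sits strictly on one side of every other belt $\belt(p,p')$. Consequently the $(p,p')$-colouring is \emph{constant} there — trivially all $\preceq$ or all $\not\preceq$ — and therefore invariant under translation by any multiple of $(\rho,\rho')$ that stays above $L_0$. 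The auxiliary relation $R$ modifies only the $(q,q')$-slice and keeps $\simul$ elsewhere; the fact that the relative $R$-neighbourhood of a translated point coincides with the relative $\simul$-neighbourhood of the original point then follows from the cross-section equality for the $(q,q')$-slice and from the trivial constancy just described for every other slice. (Parallel belts, which would share the slope, are excluded for simplicity and, as the paper notes, can be handled by considering bunches of parallel belts jointly.)
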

Before proving this lemma note that it implies Lemma~\ref{lem:simul-periodic}.
Indeed, by Theorem~\ref{thm:belt-theorem}, a cross-section contains polynomially many points, and therefore there are at most exponentially many non-equal
cross sections.
Thus, by the pigeonhole principle, there are surely two equal cross-sections
at exponentially bounded levels $n'_1$ and $n'_2$.
 
%

Now we prove Lemma~\ref{lem:periodicbelt}. The proof strongly relies on the locality of the simulation condition.
We first claim one inclusion of Lemma~\ref{lem:periodicbelt}, namely:
\begin{claim} \label{cl:incl}
$A \ \cup \ P^* \subseteq \ \simul_{q,q'}$.
\end{claim}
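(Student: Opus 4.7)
The plan is to extend $A \cup P^*$ to a candidate relation on all configurations and verify the local simulation condition. Concretely, define
\[
R_{q,q'} \;=\; \simul_{q,q'}\,\cup\, P^* \qquad\text{and}\qquad R_{p,p'} \;=\; \simul_{p,p'} \text{ for } (p,p')\neq (q,q').
\]
If $R$ is a simulation then $R\subseteq \simul$ by maximality, and in particular $P^*\subseteq \simul_{q,q'}$; together with $A\subseteq \simul_{q,q'}$ (which holds by definition of $A$) this gives the claim. Since $R\supseteq \simul$, every position already in $\simul$ is handled by \V's standard winning strategy, whose responses trivially remain in $R$. The only genuinely new positions to be checked are $(qn, q'n')$ with $(n,n')\in P + ik(\rho,\rho')$ for some $i\ge 1$.

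At such a new position I let \V\ piggy-back on his winning strategy from the shifted source $(q\bar n, q'\bar n')$, where $(\bar n,\bar n')=(n,n')-ik(\rho,\rho')\in P\subseteq \simul_{q,q'}$: given any Spoiler rule $q\step{a,d}p$, \V\ responds with the same counter change $d'$ that the source strategy would pick. The move is legal because $n'\ge \bar n'$ and $\bar n'+d'\ge 0$. The resulting configuration is $(p(n+d), p'(n'+d')) = (p(\bar n+d), p'(\bar n'+d')) + ik(\rho,\rho')$ and I must show it lies in $R$.

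When $(p,p')\neq (q,q')$, the source neighbour lies in $\simul_{p,p'}$; by disjointness of non-parallel belts outside the initial rectangle $\is$ it must be $c$-\emph{above} $\slope(p,p')$ (being $c$-below would contradict simulation), and Lemma~\ref{lem:preserve-above}(2) shows that shifting by the positive multiple $ik(\rho,\rho')$ keeps it $c$-above $\slope(p,p')$, hence still in $\simul_{p,p'}$. When $(p,p')=(q,q')$, either the source neighbour stays in $P$, and then the shifted neighbour is in $P+ik(\rho,\rho')\subseteq P^*$, or it crosses the boundary from $P$ into the top row of $A$, in which case the cross-section equality $s_1+k(\rho,\rho')=s_2$ promotes the shifted neighbour into the adjacent block $P+(i-1)k(\rho,\rho')\subseteq P^*$.

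The main obstacle is careful geometric bookkeeping at the boundaries between $A$, $P$, and the successive copies $P+jk(\rho,\rho')$: the two-row width of a cross-section is exactly what is needed to match the one-step neighbourhoods across these boundaries, and one has to check that every border-crossing move — in either direction and in either coordinate — is translated to a consistent neighbouring block. A secondary point is that if some other belt is parallel to $\belt(q,q')$, the cross-section argument must be applied simultaneously to all parallel belts; this is achieved by the same pigeonhole at levels $n'_1, n'_2$ that are chosen well above $\is$ and that witness coincidence of the joint cross-sections of all parallel belts.
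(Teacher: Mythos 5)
Your overall strategy is the same as the paper's: build a relation that contains $P^*$, verify the local simulation condition, and conclude by maximality of $\simul$. Your candidate relation is cosmetically different ($\simul\cup P^*$ instead of the paper's $(\simul\setminus\belt(q,q'))\cup A\cup P^*$), but that difference is harmless, and your piggy-back trick is exactly the paper's locality argument in disguise. The case $(p,p')\neq(q,q')$ is fine, and your upward boundary crossings (source neighbour leaving the top row of $P$) are correctly handled by the cross-section equality. However, the downward crossing is not, and that is a genuine gap.

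Concretely: if $(\bar n,\bar n')\in P$ sits at level $n'_1$ and the source response lands at level $n'_1-1$, you claim ``the cross-section equality $s_1+k(\rho,\rho')=s_2$ promotes the shifted neighbour into the adjacent block $P+(i-1)k(\rho,\rho')$.'' But that membership requires $(\bar n+d+k\rho,\ n'_2-1)\in\simul_{q,q'}$, i.e.\ a shift-invariance between levels $n'_1-1$ and $n'_2-1$. The cross-sections $s_1$ and $s_2$ live at levels $\{n'_1,n'_1+1\}$ and $\{n'_2,n'_2+1\}$ — they say nothing about $n'_1-1$ versus $n'_2-1$, and monotonicity only lets you increase \V's counter, not \R's. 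So the cited equality simply does not reach the levels you need, and a fixed anchor $(\bar n,\bar n')=(n,n')-ik(\rho,\rho')$ cannot close the argument.

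The fix is to let the anchor depend on which side of the block you are on. For $(n,n')$ in the bottom rows of $P+ik(\rho,\rho')$, piggy-back not on $(n,n')-ik(\rho,\rho')\in P$ but on $(n,n')-(i-1)k(\rho,\rho')$, which lies in $s_2$ and is in $\simul_{q,q'}$ \emph{because} $s_1+k(\rho,\rho')=s_2$. Its downward $\simul$-neighbour is then at level $n'_2-1\in[n'_1,n'_2)$, hence in $P$, and the shift by $(i-1)k(\rho,\rho')$ lands it in $P+(i-1)k(\rho,\rho')\subseteq P^*$ as wanted. This flexible re-anchoring is precisely what the paper hides behind the phrase ``the relative $R$-neighborhood of $(n,n')$ equals the relative $\simul$-neighborhood of \emph{some (possibly other)} point in $\simul_{q,q'}$.'' You flagged the boundary bookkeeping as the main obstacle, which is the right instinct, but the one case you worked out invokes the cross-section equality where it does not apply; you need the change of anchor to make it go through.
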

\begin{proof}
We show  that the following relation is a simulation:
\[
R \ \ = \ \ \simul \ \setminus \ \{ (q n, q' n') : (n,n') \in \belt(q,q')\}
\ \cup \ \{ (q n , q' n') : (n,n') \in A \ \cup \ P^* \}.
\]
(Roughly speaking, $R$ is obtained from $\simul$ by replacing $\simul_{q,q'}$ with $A \ \cup \ P^*$.)
We claim that $R$ is a simulation, relying on the locality of the simulation condition.
Formally, we define the \emph{relative $R$-neighborhood} of a point $(n,n')$ as 
\[
\{ (p l, p' l') : (p (n+l), p' (n'+l')) \in R, \ (p, p') \in Q\times Q', \ l, l' \in \{-1,0,1\} \} .
\] 
Note that the simulation condition for a pair of configurations $(q n, q' n')$ with respect to the relation $R$ 
only depends on the relative $R$-neighborhood of $(n, n')$. 
Similarly, one defines the relative $\simul$-neighborhood of a point $(n, n')$.

By the definition of cross-section and of the sets $A$ and $P$, 
the relative $R$-neighborhood of a point $(n,n') \in R$ equals the relative $\simul$-neighborhood of some 
(possibly other) point in $\simul_{q,q'}$.
Thus we deduce that every pair in $R$ satisfies the simulation condition wrt.~$R$, i.e.~$R$ is a simulation.
As $\simul$ is the largest simulation, the claim follows.
\end{proof}

%
%
In order to show the other inclusion of Lemma~\ref{lem:periodicbelt}, extend $n'_1$ and $n'_2$ to an infinite arithmetic progression
\[
n'_1, \ n'_2, \ n'_3, \ \ldots,
\]
i.e.~$n_{i+1} = n'_i + k \cdot \rho'$ for $i \geq 1$,
and consider the ``segments'' $P_i$ of $\simul_{q,q'}$ defined by the corresponding cross-sections:
\[
P_i \ = \  \{ (n,n')  \in \ \simul_{q,q'} \ : \ n'_i \leq n' < n'_{i+1} \} \qquad \text{ for } i \geq 1. 
\]
Clearly, $P = P_1$ and 
$\simul_{q,q'} \ = \ A \ \cup \ \bigcup_{i \geq 1} P_i$.
By Claim~\ref{cl:incl} it follows that $P_1 + k \cdot (\rho,\rho') \subseteq P_2$, or equivalently
$P_1 \subseteq P_2 - k \cdot (\rho,\rho')$. Analogously one shows:
\begin{align}  \label{eq:incl}
P_i  \ \subseteq \ P_{i+1} - k\cdot (\rho,\rho') \qquad \text{ for every } i \geq 1.
\end{align}
We claim that the inclusions are actually equalities:

\begin{claim} \label{cl:eq}
$P_i \ = \ P_{i+1} - k \cdot (\rho, \rho')$, for every $i \geq 1$.
\end{claim}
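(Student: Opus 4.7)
The first inclusion $P_i + k(\rho,\rho') \subseteq P_{i+1}$ is already in hand from~\eqref{eq:incl}, so I only need to prove the reverse $P_{i+1} \subseteq P_i + k(\rho,\rho')$. The plan is to exploit the maximality of $\simul$ in the same spirit as Claim~\ref{cl:incl}, combined with a pigeonhole stabilisation along the chain of candidate periods.

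Concretely, set $Q_i := P_i - (i-1)k(\rho,\rho')$; by~\eqref{eq:incl} the chain $Q_1 \subseteq Q_2 \subseteq \ldots$ is non-decreasing, and by the translation invariance of the belt every $Q_i$ lives inside the finite window $\belt(q,q') \cap \{(n,n') : n'_1 \leq n' < n'_2\}$. Hence the chain stabilises at some index $i_0$, with $Q_i = Q_\infty := Q_{i_0}$ for all $i \geq i_0$. The plan is to show $Q_1 = Q_\infty$ by contradiction. Supposing $Q_1 \subsetneq Q_\infty$, I would define a relation $R'$ that agrees with $\simul$ outside the belt and on rows $n' < n'_1$ inside the belt (where it uses $A$), but on rows $\geq n'_1$ reuses the stabilised tile $Q_\infty$ periodically: the pattern on $[n'_i, n'_{i+1})$ is $Q_\infty + (i-1) k (\rho,\rho')$ for every $i \geq 1$. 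By construction, $R'$ agrees with $\simul$ on rows $\geq n'_{i_0}$ but strictly contains it on rows $[n'_1, n'_{i_0})$.

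Next I would verify that $R'$ is a simulation, following the locality argument in the proof of Claim~\ref{cl:incl}: the relative $R'$-neighbourhood of any point in rows $\geq n'_1$ is, by the periodic construction, a shift by a multiple of $k(\rho,\rho')$ of an $R'$-neighbourhood at a corresponding point in the stabilised region, where $R'=\simul$. The assumption $s_1 + k(\rho,\rho') = s_2$ is just enough to guarantee 2-row consistency at the boundary between $A$ and the first periodic tile. Together these imply that $R'$ is a simulation strictly containing $\simul$, contradicting maximality. Hence $Q_1 = Q_\infty$; by stabilisation all $Q_i$ coincide, and so $P_i = P_1 + (i-1) k (\rho,\rho')$, which in particular yields $P_{i+1} - k(\rho,\rho') = P_i$ as claimed.

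The hard step will be the simulation check at the boundary row $n'_1$. A point $(n, n'_1) \in R'$ has its relative neighbourhood see $A$ below at row $n'_1 - 1$ but $Q_\infty$ above at rows $n'_1, n'_1 + 1$; the analogous neighbourhood at the corresponding stabilised point $(n + (i_0 - 1)k\rho,\, n'_{i_0})$ has $P_{i_0 - 1}$'s last row below, which is generally not a shift of $A$. The reconciliation requires iterating the cross-section equality at levels $n'_i$ for $i \geq i_0$ together with the stabilisation $Q_i = Q_\infty$ to show that $P_{i_0 - 1}$'s last row is forced to coincide with the appropriate shift of $A$'s row $n'_1 - 1$, so that the local simulation checks for $R'$ at the boundary reduce to ones that $\simul$ already satisfies in the stabilised range.
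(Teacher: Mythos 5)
The route you take is genuinely different from the paper's, and it has a gap at exactly the place you flag as "the hard step."

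The paper does \emph{not} pass through a stabilisation index $i_0$. Instead it builds a single relation
\[
R \ = \ \simul \ \setminus\  \{ (q n, q' n') : (n, n') \in \belt(q,q') \}
\ \cup \ \{ (q n, q' n') : (n,n') \in A \ \cup\  \textstyle\bigcup_{i \geq 2} \bigl(P_i - k \cdot (\rho,\rho')\bigr) \},
\]
i.e.\ it leaves $A$ untouched and shifts the whole tail $\bigcup_{i\geq 2} P_i$ down by exactly \emph{one} period $k(\rho,\rho')$. By eq:incl this $R$ contains $\simul$, and if $R$ is a simulation, maximality gives $R = \simul$, which is the claim. The choice of a one-period shift is what makes the boundary work: at rows $n'_1$ and $n'_1 + 1$ the shifted tail $P_2 - k(\rho,\rho')$ restricted to those two rows is precisely $s_2 - k(\rho,\rho') = s_1$, i.e.\ it coincides with $\simul$'s own rows there, because that is literally what the cross-section equality $s_1 + k(\rho,\rho')=s_2$ says. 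So the $R$-neighbourhood of any point at row $n'_1$ (and $n'_1+1$) is \emph{identical} to its $\simul$-neighbourhood; no reconciliation is needed. For rows $\geq n'_1+2$ the $R$-neighbourhood is a clean translate of a $\simul$-neighbourhood one period higher, and monotonicity together with eq:incl handles the remaining comparisons.

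Your construction instead tiles from $n'_1$ upward with the stabilised tile $Q_\infty = Q_{i_0}$. Here the boundary at $n'_1$ compares $A$'s row $n'_1-1$ (unchanged) against what sits below the stabilised region, namely $\simul$'s row $n'_{i_0}-1$, the last row of $P_{i_0-1}$. You correctly note these need not be translates of each other, and propose to force them to agree by ``iterating the cross-section equality at levels $n'_i$ for $i \geq i_0$ together with $Q_i = Q_\infty$.'' This is where the argument breaks down: the hypothesis gives you a cross-section equality \emph{only} between levels $n'_1$ and $n'_2$; there is no given equality at level $n'_{i_0}$ or any other $n'_i$. Cross-section equalities at all levels are exactly the ultimate periodicity of the belt colouring, which is what the lemma is trying to prove, so invoking them here is circular. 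The stabilisation $Q_i = Q_\infty$ ($i\geq i_0$) only constrains rows $\geq n'_{i_0}$; it says nothing about row $n'_{i_0}-1$, which lives in $P_{i_0-1}$ and is not controlled by stabilisation. Nor does monotonicity help: you would need $A$'s row $n'_1-1$, shifted by $(i_0-1)k(\rho,\rho')$, to \emph{contain} $\simul$'s row $n'_{i_0}-1$, and neither eq:incl nor Claim~\ref{cl:incl} gives that direction.

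In short: eq:incl and the one given cross-section equality pin down the boundary only across a single period gap, which is why the paper shifts by exactly one period and gets an exact match at rows $n'_1$, $n'_1+1$. Shifting by $(i_0-1)$ periods, as your stabilisation argument implicitly does, outruns the available boundary information. If you want to keep the stabilisation framing, the fix is to show $Q_i = Q_{i+1}$ for each $i$ directly via the one-period shifted relation (which is the paper's move), rather than jumping to $Q_\infty$ and trying to repair the boundary afterwards.
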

\begin{proof}
Due to Equation~\eqref{eq:incl}, it suffices to show the inclusions $P_{i+1} - k \cdot (\rho, \rho') \ \subseteq \ P_i$.
The inclusions follow, similarly as in the proof of Claim~\ref{cl:incl}, from the observation that the following relation
is a simulation:
\[
R \ \ =  \ \ \simul \ \setminus\  \{ (q n, q' n') : (n, n') \in \belt(q,q') \} 
\ \ \cup \ \ \{ (q n, q' n') : (n,n') \in A 
\ \cup \ \bigcup_{i \geq 2} P_i - k \cdot (\rho,\rho') \}.
\]
The relation $R$ is obtained from $\simul$, roughly speaking, by removing the first segment $P_1$ and shifting all other segments
$P_i$ by vector $- k \cdot (\rho, \rho')$. To prove that $R$ is a simulation, we exploit locality of the simulation condition 
exactly as before.
Additionally, we use the observation that the simulation condition is monotonic with respect to inclusion of relative neighborhoods,
together with the inclusions~\eqref{eq:incl}.
\end{proof}

\noindent
Claim~\ref{cl:eq} immediately implies Lemma~\ref{lem:periodicbelt}
and thus Lemma~\ref{lem:simul-periodic}.

\newpage
\section{Weak Simulation Checking}
\label{sec:app_weaksim}
We show that the bounds on the coefficients of the Belt Theorem,
as derived in Section~\ref{sec:beltproof}, imply that the construction
from \cite{HMT:LICS2013} for checking \emph{weak} simulation uses only polynomial space.

In order to avoid repeating the involved construction
from \cite{HMT:LICS2013}, we refer the reader to the original paper for technical details
and recover here only those notions and properties which suffice
to provide some intuition and derive the claimed \pspace\ bound.

We aim to compute a description of $\WSIM{}{}$, the largest weak simulation over a given pair
of \OCN. First we reduce this weak simulation game to
a strong simulation game between two modified systems.

\newcommand{\mnet}{{\cal M}}

\begin{definition}[$\omega$-Nets]
    An \emph{$\omega$-net} $\mnet=(Q,\Act,\delta)$ is given by a finite set of
    control-states $Q$, a finite set of actions $\Act$ and transitions
    $\delta\subseteq Q\x \Act\x\{-1,0,1,\omega\}\x Q$. It induces a transition system over the
    stateset $Q\x\N$
    that allows a step $pm\step{a}p'm'$ if either $(p,a,d,p')\in\delta$ and $m'=m+d\in\N$ or if
    $(p,a,\omega,p')\in\delta$ and $m' > m$.
\end{definition}

\begin{lemma}[\cite{HMT:LICS2013}]\label{lem:reduction}
    For two \OCN\ $\net$ and $\net'$ with sets of control-states $Q$ and $Q'$ resp., one
    can construct a \OCN\ $\mnet$ with states $Q_{\mnet}\supseteq Q$
    and an $\omega$-net $\mnet'$ with states $Q_{\mnet'}\supseteq Q'$, such that for
    each pair $(q,q')\in Q\x Q'$ of original control states,
    \begin{equation}
        qn\WSIM{}{} q'n'\text{ w.r.t. }\net,\net'\text{ iff }qn\SIM{}{} q'n'\text{ w.r.t. }\mnet,\mnet'.
    \end{equation}
    Moreover, the sizes of $\mnet$ and $\mnet'$ are polynomial
    in the size of $\net$ and $\net'$.
\end{lemma}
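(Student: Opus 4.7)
The plan is to construct $\mnet$ and $\mnet'$ so that atomic transitions in these systems exactly capture the weak moves available in the weak Simulation Game over $\net,\net'$. Since in that game only \V\ uses weak moves (while \R\ still plays atomic steps), I would leave Spoiler's net essentially unchanged, setting $\mnet$ to coincide with $\net$ modulo a minor relabelling so that $\tau$ becomes an ordinary visible action. The bulk of the construction concerns $\mnet'$, and it is there that the $\omega$-transitions become indispensable.

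For $\mnet'$, the central idea is to replace every weak transition $q'\wstep{a}p'$ by a single strong transition capturing both the unrestricted $\tau^*$-prefix/suffix and the visible $a$-step in between. I would first compute, for every pair $(q',p')\in Q'\x Q'$, the set $D_{q',p'}\subseteq\Z$ of counter effects realisable by $\tau^*$-paths from $q'$ to $p'$ in $\net'$. Analysing the strongly-connected-component decomposition of the $\tau$-subgraph shows that $D_{q',p'}$ decomposes into a bounded part whose representative values have polynomial magnitude in $|\net'|$, possibly together with an infinite arithmetic progression extending upwards whenever a positive-effect $\tau$-cycle is reachable in the $\tau$-closure. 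For every triple $(q',a,p')$ I would then incorporate into $\mnet'$ transitions from $q'$ to $p'$: concretely realising each bounded effect and covering any unbounded positive tail by a single $\omega$-labelled transition, which is exactly why $\mnet'$ must be an $\omega$-net rather than an \OCN.

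The main obstacle is that elementary transitions in $\mnet'$ change the counter by at most one, whereas the bounded effects in $D_{q',p'}$ can reach polynomial magnitude. I would resolve this by introducing polynomially many auxiliary control-states that form a \emph{gadget} replaying a canonical witness path for each realisable bounded effect. So that \R's single atomic move triggers exactly the gadget selected by \V, one announces the intended target with fresh action labels: \R\ first picks the visible action and commits in $\mnet$ to a particular pair of target control-states, and \V\ responds by entering the corresponding gadget in $\mnet'$. Spoiler has no productive moves during a gadget traversal, so its internal structure behaves as an atomic response in the game semantics.

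Verifying the correspondence amounts to exhibiting strategy translations in both directions: a winning \V\ strategy in the weak Simulation Game over $\net,\net'$ lifts to a winning strong strategy over $\mnet,\mnet'$ by picking, at each round, the corresponding strong or $\omega$-transition with the witnessed counter gain; conversely, a winning strong strategy in $\mnet'$ unfolds into a weak response by executing each gadget as the corresponding $\tau^*$-sequence in $\net'$, instantiating each $\omega$-move by pumping the appropriate positive-effect $\tau$-cycle sufficiently often so as to exceed the counter demands of the rest of the play. Polynomial size of $\mnet$ and $\mnet'$ then follows from the polynomial bound on the number of relevant triples $(q',a,p')$, the polynomial size of each gadget, and the compact $\omega$-representation of the unbounded positive tails.
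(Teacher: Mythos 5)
The paper does not prove Lemma~\ref{lem:reduction}; it imports it verbatim from \cite{HMT:LICS2013}, so there is no in-paper proof to compare against. Judged on its own merits, your sketch heads in a plausible direction (fold weak steps of \V\ into single transitions, cover unbounded positive $\tau$-gains with $\omega$-arcs, realise bounded effects via gadgets), but two concrete steps would fail as written.

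First, the gadget coordination is set up with the wrong player in control. You write that \R\ ``commits in $\mnet$ to a particular pair of target control-states, and \V\ responds by entering the corresponding gadget.'' But the choice of where a weak step $\wstep{a}$ lands — the target control-state $p'$ and the $\tau$-effect — belongs to \V, who owns $\net'$; if \R\ dictates which gadget is entered, you have given \R\ the power to deny \V\ weak responses that he actually has in the original game, and the $\Rightarrow$ direction of the equivalence breaks. The mechanism needed is the standard \emph{defender's forcing} (as the paper itself uses elsewhere, in Appendix~B): \V, not \R, announces the gadget, and the fresh labels are arranged so that \R\ cannot escape following it without losing. This also makes the opening claim that $\mnet$ ``coincides with $\net$ modulo a minor relabelling'' untenable — \R\ must be given matching fresh-labelled auxiliary transitions so that play can tick through the gadget, so $Q_\mnet \supsetneq Q$ necessarily.

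Second, the reduction to a set $D_{q',p'}\subseteq\Z$ of $\tau^*$-effects silently discards the non-negativity constraint on intermediate counter values: two $\tau^*$-paths from $q'$ to $p'$ with identical total effect can require very different starting counter values to be firable. Replaying ``a canonical witness path'' is the right instinct, but you have to argue that a witness can be chosen that is simultaneously of polynomial length and as permissive (lowest feasible starting value) as any realiser of that effect, and that polynomially many such canonical effects suffice. Without that argument the construction could either under- or over-approximate \V's power at low counter values, and the claimed ``iff'' is unjustified precisely in the regime that determines the belt structure the rest of the paper relies on.
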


Thus, it suffices to compute a description of the strong simulation relation relative to
a given \OCN\ $\mnet=(Q,\Act,\delta)$ and an $\omega$-net $\mnet'=(Q',\Act,\delta')$.
To do that, we construct a sequence of successively decreasing (w.r.t.~set inclusion)
approximant relations $\SIM{i}{}$ and show that 1) for all $i\in\N$, $\SIM{i}{}$ is effectively semilinear and 2)
there is some $k\in\N$ with $\SIM{k}{}=\SIM{k+1}{}=\SIM{}{}$, i.e., the sequence converges to
simulation preorder at some finite level $k$.

Intuitively, $\SIM{i}{}$ is given by a \emph{parameterized simulation game} that keeps track of
how often \V\ uses $\omega$-labelled transitions and in which
\V\ immediately wins if he plays such a step the $i$th time.
It is easy to see that this game favours \V\ due to the additional winning condition.
With growing index $i$, this advantage becomes less important and the game
increasingly resembles a standard simulation game. Hence,
$\forall i\in\N, \SIM{i}{}\supseteq\SIM{i+1}{}$.

In \cite{HMT:LICS2013}, it is shown that these approximants $\SIM{i}{}$ can in fact be characterized 
by equivalent (in the sense of Lemma~\ref{lem:snets} below) ordinary strong simulation 
relations between suitably extended \OCN.
\begin{lemma}\label{lem:snets}
There is a sequence $(\snet_i, \snet_i')$ of pairs of \OCN\ such that for all indices $i\in\N$:
\begin{enumerate}
    \item $\snet_i$ and $\snet_i'$ contain all states of $\mnet$ and $\mnet'$ respectively.
    \item \label{lem:snets:char} For all configurations $qn\in (Q\x\N)$ and $q'n'\in (Q'\x\N)$ of $\mnet$ and $\mnet'$ it holds
          that $qn\SIM{i}{}q'n'$ w.r.t.~$\mnet,\mnet'$ iff $qn\SIM{}{}q'n'$ w.r.t.~$S_i,S_i'$.
    \item $\snet_{i+1}$ and $\snet_{i+1}'$ can be computed from $\snet_i$ and $\snet_i'$ alone.
\end{enumerate}
\end{lemma}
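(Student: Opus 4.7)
The plan is to construct the sequence $(\snet_i, \snet_i')$ by induction on $i$, refining the ideas of \cite{HMT:LICS2013} that characterise each parameterised approximant $\SIM{i}{}$ via ordinary strong simulation on a modified pair of \OCN.

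For the base case $i=0$, the parameterised game is set up so that \V\ wins immediately on the $0$-th use of any $\omega$-transition, so $\SIM{0}{}$ is the universal relation on $(Q\x\N) \times (Q'\x\N)$. Accordingly, I would take $\snet_0 = \mnet$ and let $\snet_0'$ extend $\mnet'$ so that every control-state is a universal simulator, e.g.~by equipping each state with self-loops $q' \step{a,0} q'$ for every $a \in \Act$. Ordinary strong simulation on $(\snet_0, \snet_0')$ is then the universal relation, as required.

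For the inductive step, given $(\snet_i, \snet_i')$, I would obtain $(\snet_{i+1}, \snet_{i+1}')$ from $\mnet$ and $\mnet'$ by replacing each $\omega$-transition $p \step{a,\omega} q'$ of $\mnet'$ with a finite \emph{gadget} into which \V\ may pump his counter by arbitrarily many $+1$ steps before entering the state $q'$ in an embedded copy of $\snet_i'$. A dual modification on \R's side in $\snet_{i+1}$ forces the game to traverse this gadget whenever the $\omega$-branch is chosen. Intuitively, \V's single $\omega$-step, which permits any strictly larger counter value, is translated into a finite sequence of $+1$ moves, after which play continues in $(\snet_i, \snet_i')$ from $q'$ and inductively characterises $\SIM{i}{}$. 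The properties that $\snet_{i+1}, \snet_{i+1}'$ contain all original control states (point 1) and are effectively computable from $\snet_i, \snet_i'$ (point 3) are immediate from this construction.

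The main obstacle is correctness (point 2): strong simulation on $(\snet_{i+1}, \snet_{i+1}')$ must coincide with $\SIM{i+1}{}$ on $(\mnet, \mnet')$ for all pairs of original configurations. The delicate direction is to ensure that \V\ retains the full strategic freedom of an $\omega$-step within the finite OCN gadget — in particular, that \R\ cannot prevent \V\ from committing to any desired counter increment — and conversely that \V\ does not gain spurious winning positions through the gadget. Verifying this requires a careful game-theoretic matching between moves in the parameterised game on $(\mnet, \mnet')$ and plays on $(\snet_{i+1}, \snet_{i+1}')$, exploiting the symbolic characterisation of \V's infinitely-branching $\omega$-step developed in \cite{HMT:LICS2013}.
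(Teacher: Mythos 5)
Your gadget for the inductive step is not the one the paper uses, and as described it does not work. In strong simulation on \OCN, moves alternate; if \V\ enters a sub-gadget where he ``may pump his counter by arbitrarily many $+1$ steps before entering $q'$,'' then either \R\ is also forced to move along the pumping, in which case \V\ can loop forever and never reach $q'$ (a spurious win), or \R\ can break out, in which case \V\ cannot guarantee the counter increment he wants. You flag exactly this as ``the delicate direction,'' but no mechanism is given to resolve it, and in fact there is none along these lines: an $\omega$-step is inherently infinitely branching and cannot be mimicked transition-by-transition by a finite OCN gadget.

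The paper's construction sidesteps this entirely. Rather than encoding the $\omega$-step dynamically, it \emph{precomputes} from $(\snet_i,\snet_i')$ (via the Belt Theorem and the semilinear description of $\SIM{i}{}$) the thresholds $\suff{q,q',i}$, i.e.\ the minimal counter value on \R's side at which no response of \V\ at $(q,q')$ is sufficient in $\SIM{i}{}$. Each $\omega$-transition in $\mnet'$ is then replaced by a short defender's-forcing script that routes the play into a \emph{test gadget}: on \R's side, a counter-decreasing chain of $e$-steps of length $\suff{q,q',i}$ ending in an action \V\ cannot answer; on \V's side, a single $e$-loop. \V\ survives the gadget iff \R's counter is below the hard-wired threshold, which is exactly the condition for \V\ to have a winning $\omega$-response in the parameterised game. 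Consequently $\snet_i'=\snet_1'$ for all $i$, and the construction is genuinely inductive only on \R's side. This also matters for the intended complexity bound: your version embeds a full copy of $\snet_i'$ into $\snet_{i+1}'$, so the nets grow unboundedly with $i$, whereas the paper needs (and obtains, via Lemma~\ref{lem:chains}) polynomially bounded sizes for all $i$. Finally, the paper's base case is $i=1$ with all $\suff{q,q',1}=\omega$ (trivial loop gadgets), not $i=0$ with universal self-loops; the latter is not what the construction of \cite{HMT:LICS2013} produces.
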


The above conditions ensure decidability of weak simulation as they allow to iteratively
compute the approximants and detect convergence, by the effective semilinearity of 
strong simulation over \OCN\ \cite{JKM2000}.

To obtain an upper bound for the complexity of this procedure, we will bound the sizes of all
$(S_i,S_i')$ polynomially in the sizes of $\mnet$ and $\mnet'$.
To do that, we recall some more properties of the construction, starting by describing how the nets
$\snet_i$ and $\snet_i'$ look like.

\paragraph*{The nets $\snet_i$ and $\snet_i'$}
These nets are constructed using the notion of \emph{minimal sufficient values}:
\begin{definition}
    Consider the approximant $\SIM{i}{}$ for some parameter $i$,
    which is characterised by nets $\snet_i,\snet_i'$ (cf.\ point~\ref{lem:snets:char} of
    Lemma~\ref{lem:snets} above) and let $(q,q')\in (Q\x Q')$ be a pair of states.
    By monotonicity, there is a minimal value $\suff{q,q',i}\in\N\cup\{\omega\}$ satisfying
    \begin{equation}\label{sufficient_values}
      \forall n'\in\N.\ q(\suff{q,q',i}) \notSIM{i}{} q'n'.
    \end{equation}
    Let $\suff{q,q',i}$ be $\omega$ if no finite value satisfies this condition.
\end{definition}

The idea behind the construction of nets for parameter $i+1$ is as follows.
A Simulation Game played on the arena $\snet_{i+1},\snet_{i+1}'$ mimics
the $(i+1)$-parameterized simulation game played on $\mnet,\mnet'$
until \V\ uses an $\omega$-labelled transition, leading to some game position $qn$ vs.\ $q'n'$.
Afterwards, the parameterized game would continue with the next lower parameter $i$.

By induction assumption, we can compute a representation of $\SIM{i}{}$
and hence $\suff{q,q',i}$ for every pair $(q,q')$.
Given these values, the nets $\snet_{i+1}$ and $\snet_{i+1}'$ are constructed so that instead of
making steps that are due to $\omega$-labelled transitions, \V\ can enforce the play to continue in
some subgame that he wins iff \R's counter is smaller than the hard-wired value $\suff{q,q',i}$.

This ``forcing'' of the play can be implemented for \OCN\ simulation using a standard technique
called \emph{defender's forcing} (see e.g.~\cite{KJ2006}).
So, the nets $\snet_i$ and $\snet_{i}'$ consist of the original nets $\mnet,\mnet'$ where
all $\omega$-transitions in \V's net $\mnet'$ are replaced by a small constant
defenders-forcing script, leading to the corresponding testing gadgets
that test if \R's counter is at least as large as the pre-computed sufficient value
and let \R\ win only if that is the case.

The actual test-gadgets are not very complicated: On \V's side, all gadgets are the same simple
loop over a newly introduced symbol, say $e$. Hence, $\snet_{i}'=\snet_1'$ for every $i$ and this new
net is polynomial in the size of $\mnet'$ and $\mnet$.

In \R's net $\snet_i$, the gadgets $G(q,q',i)$ for states $(q,q')$ and index $i$ solely depend on the
value $\suff{q,q',i}$:
If $\suff{q,q',i}$ is finite, it suffices to have a counter-decreasing chain of $e$-steps of length $\suff{q,q',i}$,
leading to some state which enables an action that cannot be replied to by \V.
Otherwise, if $\suff{q,q',i}=\omega$ (no counter finite value satisfies Equation~\ref{sufficient_values}),
\R\ should always lose, so a simple $e$-labelled loop can be used as gadget.
To conclude, each $\snet_i$ essentially consists of $\mnet$ plus chains $G(q,q',i)$, one for every
pair of states $(q,q')$. We summarize the crucial properties of this construction below.

\begin{lemma}\label{lem:chains}\
\begin{enumerate}
\item \label{i:omega} $\suff{q,q',1}=\omega$ for every pair $(q,q')\in Q\x Q'$.
  \item \label{i:decrease} $\suff{q,q',i}\geq \suff{q,q',i+1}$.
  \item \label{i:number} $(\snet_i,\snet_i')$ contains precisely $|Q\x Q'|$ many gadgets, each.
  \item \label{i:linear} If $\suff{q,q',i}\in \N$ then the size of gadget $G(q,q',i)$ is $\suff{q,q',i}+2$.
  \item \label{i:notransiton} No chain $G(q,q',i)$ contains transitions leading back to $\mnet$.
\end{enumerate}
\end{lemma}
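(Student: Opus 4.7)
The proof verifies each of the five items separately. All are consequences of how the nets $\snet_i, \snet_i'$ are constructed in \cite{HMT:LICS2013}, together with properties of the parameterised game that characterises $\SIM{i}{}$.

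For item~1, my approach is to show that the parameterised relation $\SIM{1}{}$ is universal: $qn \SIM{1}{} q'n'$ holds for all $q,q',n,n'$. At parameter $i=1$, \V\ wins immediately upon using any $\omega$-transition, and the $\omega$-net $\mnet'$ obtained via Lemma~\ref{lem:reduction} is structured so that $\omega$-transitions are available to encode weak-step responses to every action of \R. Hence \V\ can always trigger an $\omega$-move and win the parameterised game from any starting configuration. Consequently Equation~\eqref{sufficient_values} has no finite witness, giving $\suff{q,q',1} = \omega$.

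For item~2, the parameterised relations $(\SIM{i}{})_{i \in \N}$ form a non-increasing chain: a larger parameter strictly restricts \V's budget for $\omega$-moves, so any \V-strategy winning at parameter $i+1$ is also winning at parameter $i$. Hence $\SIM{i+1}{} \subseteq \SIM{i}{}$. Since $\suff{q,q',i}$ is the smallest counter value $n$ such that no $n'$ yields a simulating pair, this set-inclusion translates into the non-increasing sequence of sufficient values claimed.

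Items~3--5 follow by direct inspection of the gadget construction. Item~3 holds because gadgets are indexed by pairs $(q,q') \in Q \x Q'$ and introduced once per pair. For item~4, when $\suff{q,q',i} = v$ is finite, the testing chain consists of $v$ counter-decreasing $e$-steps, plus two additional states: the entry state where the defender-forcing script lands, and a final dead state enabling a \R-move that \V\ cannot match. This gives exactly $v + 2$ states. Item~5 holds because each chain is designed as a one-way test of \R's counter, so its transitions either strictly decrease the counter within the chain or lead to dead states, never returning to $\mnet$. The main obstacle is item~1, which requires combining the semantics of the parameterised game at $i = 1$ with the precise structure of $\mnet'$ from Lemma~\ref{lem:reduction}; the remaining items are essentially bookkeeping on the construction.
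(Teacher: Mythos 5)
The paper offers no separate proof of Lemma~\ref{lem:chains}: it is presented as a summary of properties read directly off the gadget construction described just before it. Your justifications for items~2--5 match that reading. Monotonicity of the approximant chain $\SIM{i}{}\supseteq\SIM{i+1}{}$ (more $\omega$-budget for \V\ at lower index) gives item~2; the construction installs exactly one chain $G(q,q',i)$ per pair, giving item~3; a counter-decreasing $e$-chain of length $\suff{q,q',i}$ with an entry state and a terminal deadlock state accounts for the $\suff{q,q',i}+2$ states in item~4; and item~5 is by design of the one-way test gadgets. All of this is consistent with the paper.

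Item~1 is where your argument has a genuine gap. You derive $\suff{q,q',1}=\omega$ from the claim that $\SIM{1}{}$ is \emph{universal}, on the grounds that $\mnet'$ from Lemma~\ref{lem:reduction} ``has $\omega$-transitions available to encode weak-step responses to every action of \R.'' That premise is stronger than the reduction supplies: $\omega$-transitions in $\mnet'$ are there precisely to encode weak responses whose counter gain is \emph{unbounded}, i.e.\ ones that can pass through a positive $\tau$-cycle; a state with no such capability retains only ordinary $\{-1,0,1\}$ effects. In the extreme case where $\net'$ has no $\tau$-transitions, $\mnet'$ has no $\omega$-transitions at all, $\SIM{1}{}$ collapses to strong simulation, and it is easy to produce pairs $(q,q')$ where \R\ can drain \V's counter regardless of $n'$ --- that is, a finite $\suff{q,q',1}$ and a clear failure of universality. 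So the inference ``\V\ can always trigger an $\omega$-move and win'' does not follow from the reduction. The fact that $\suff{q,q',1}=\omega$ is really an artefact of the \emph{initialization} of the iterative scheme: the base nets $\snet_1,\snet_1'$ are constructed with all gadgets being trivial $e$-loops (so $\omega$ is hard-wired as the starting sufficient value for every pair), and the definition of $\suff{q,q',i}$ in terms of $\SIM{i}{}$ only becomes informative from the next index onward. Your proof of item~1 needs to argue from this initialization convention, not from a universal win for \V\ in the parameterised game.
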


Using properties \ref{i:decrease} and \ref{i:number} we derive that indeed
$(\snet_k,\snet_k')=(\snet_{k+1},\snet_{k+1}')$, and hence $\SIM{k}{}=\SIM{k+1}{}=\SIM{}{}$ for some
finite $k\in\N$.

Our goal is to bound the sizes of the nets $\snet_i,\snet_i'$ polynomially in the sizes of
$\mnet,\mnet'$ and to show that they can indeed be constructed in polynomial space.
From point~\ref{i:omega} and the fact whenever $\suff{q,q',i}=\omega$, the gadget $G(q,q',i)$ is a trivial
loop, we already know that the sizes of $\snet_1$ and $\snet_1'$ are polynomial in $\mnet,\mnet'$.
Due to the particular shape of the nets $(\snet_{i+1},\snet_{i+1}')$, it suffices to bound
the values $\suff{q,q',i}$.

\paragraph*{Bounding $\suff{q,q',i}$}
Observe that $\suff{q,q',i}$ is defined in terms of the approximant $\SIM{i}{}$, which is
characterized as the strong simulation $\SIM{}{}$ relative to the nets $\snet_i,\snet_i'$ by
Lemma~\ref{lem:snets}, Point 2. In fact, if we consider the colouring of $\SIM{}{}$
w.r.t.\ $\snet_i,\snet_i'$, the value $\suff{q,q',i}$ is the width of the belt
for $(q,q')$ if this belt is vertical and $\omega$ otherwise.
Therefore, the value $c$ in the Belt Theorem applied to this colouring bounds all finite $\suff{q,q',i}$.

We show how to bound $c$ using the sharper estimation as formulated in Section~\ref{sec:sharper:estimation}
in terms $\scc$, the maximal size of any strongly connected component and $\acyc$,
the length of the longest acyclic path in the product $\snet_i\x\snet_i'$:
\begin{equation}
    c\le \poly(\scc) + \acyc.
\end{equation}
This allows us to bound all values $\suff{q,q',i}$ and hence the size
of the nets for index $i+1$.

First, observe that the shape of all $\snet_i,\snet_i'$ (particularly Point~\ref{i:notransiton}
of Lemma~\ref{lem:chains}, and the fact that $\forall_{i\in \N} S_i'=S_1'$) implies that
the strongly connected components are unchanged from index $i=1$ onward.
Thus, $\scc$ is in fact polynomial in $\mnet,\mnet'$.
Secondly, any path in the product $\snet_i\x\snet_i'$ can be split into two (possibly empty) parts:
the part that remains in $\mnet\x\mnet'$ and a suffix that moves into at most one
gadget $G(q,q',i)$.
Since the maximal length of paths in $G(q,q',i)$ is bounded by $\suff{q,q',i}$, we can bound $\acyc$ as
follows.
\begin{equation}
    \acyc \le |\mnet\x\mnet' | + \max\{\suff{q,q',i}\in\N\ |\ (q,q')\in Q\x Q'\}.
\end{equation}
Let $W_i$ denote the maximal width of all vertical belts at level $i$,
i.e., the largest finite value $\suff{q,q',i}$ over all $(q,q')$.
By the argument above, we get for all indices $i\in\N$,
\begin{equation}
    W_{i+1} \le \poly(\mnet,\mnet') + W_i.
\end{equation}
Now, from properties \ref{i:decrease} and \ref{i:linear} of Lemma~\ref{lem:chains} we can deduce 
that there are no more than
$K=|Q\x Q'|$ indices $i$ such that $W_{i+1} \ge W_i$.
This is because the size of the value $\suff{q,q',i}$ for a particular pair $(q,q')$ can only increase once,
going from index $i$ to $i+1$ if $\suff{q,q',i}=\omega > \suff{q,q', i+1}$.
Therefore, we can bound $W_i$, and thus values $\suff{q,q',i}$, for all indices $i\in\N$ by
\begin{equation}
  W_i\leq K \cdot\ (poly(|\mnet\x\mnet'|)+ 1).
\end{equation}
We conclude that the sizes of all $\snet_i,\snet_i'$ are polynomial in the sizes of $\net$ and $\net'$.
It remains to show that we can compute these values in polynomial space, because this allows
us to effectively construct the nets for the next parameter $i+1$.
 
\paragraph*{Computing $\suff{q,q',i}$}
  We analyse the colouring of the simulation $\SIM{}{}$ relative to the one-counter nets $\snet_i$ and $\snet_i'$.
  In particular we need to answer the following questions, for each given pair of states $(q,q')\in Q\x Q'$,

  \begin{enumerate}
      \item Is the belt for $(q,q')$ vertical? And if yes,
      \item What is its exact width?
  \end{enumerate}

  By Theorem~\ref{thm:belt-theorem}, we can bound all ratios $(\rho,\rho')$, which are the
  slopes of belts polynomially. Let $(\rho,\rho')$ be the ratio of the steepest belt with $\rho'>0$.
  Recall that $c$ bounds the width of all vertical belts.
  To answer the first question, it suffices to check the colour of some point $(n,n')$
  that is both $c$-above $(\rho,\rho')$ and $c$-below of $(0,1)$, i.e., $n>c$.
  For instance, $n = c+1$ and $n'=2(c+1)(\frac{\rho}{\rho'})$ is surely such a point.

  If the belt for $(q,q')$ is vertical, then by Theorem~\ref{thm:belt-theorem}, Point 2, we have
  $qn,\notSIM{}{}q'n'$.
  Otherwise, if the belt is not vertical, then by point 1 of
  Theorem~\ref{thm:belt-theorem}, we must have $qn,\SIM{}{}q'n'$.

  To answer the second question, we consider the periodicity description of the colouring
  (cf.\ Section~\ref{sec:pspace}).
  Although this description is of exponential size and we thus cannot fully keep it in
  memory, we can, in polynomial space, compute point queries.
  Moreover, we know that the colouring in any belt is described by some non-trivial initial
  colouring and is repetitive from some exponentially bounded level onwards.
  Thus, if we consider the vertical belt for states $(q,q')$, from some level $n'_0$,
  the colouring stabilizes so that for all $n'\ge n'_0$, we have
  $qn\SIM{}{}q'n'$ iff $n< \suff{q,q',i}$.

  We can now iteratively check the colour of the point
  $(n,n'_0)$ for decreasing values $n=c$ to $0$ and some fixed, but sufficiently high $n'_0$.
  By Theorem~\ref{thm:strongsim-pspace}, this can surely be done in polynomial space.
  $\suff{q,q',i}$ must be the largest considered $n<c$ where $qn\notSIM{}{}q'n'_0$ still holds.

\end{document}